\documentclass[journal]{IEEEtran}
\usepackage{amsmath,amsfonts,amssymb}
\usepackage{amsthm}                 
\usepackage{mathtools}
\usepackage{algorithmic}
\usepackage{algorithm}
\usepackage{array}
\usepackage[caption=false,font=normalsize,labelfont=sf,textfont=sf]{subfig}
\usepackage{textcomp}
\usepackage{stfloats}
\usepackage{url}
\usepackage{verbatim}
\usepackage{graphicx}
\usepackage{cite}
\usepackage{bm}
\usepackage{booktabs}
\usepackage{microtype}
\usepackage{xcolor}
\usepackage[normalem]{ulem}
\usepackage{float}

\definecolor{IEEEAddBlue}{RGB}{0,0,0}
\definecolor{IEEEDelRed}{RGB}{255,0,0}
\newcommand{\added}[1]{\textcolor{IEEEAddBlue}{#1}}

\newenvironment{addedtext}{\color{IEEEAddBlue}}{}

\newcommand{\R}{\mathbb{R}}

\DeclareMathOperator{\Span}{span}
\DeclareMathOperator{\tr}{tr}
\DeclareMathOperator{\rank}{rank}

\DeclareMathOperator{\blkdiag}{blkdiag}

\theoremstyle{plain}
\newtheorem{theorem}{Theorem}
\newtheorem{lemma}[theorem]{Lemma}
\newtheorem{proposition}[theorem]{Proposition}

\theoremstyle{definition}

\newtheorem{remark}{Remark}

\usepackage[hidelinks]{hyperref}
\usepackage[nameinlink,capitalize,noabbrev]{cleveref}
\begin{document}
\bstctlcite{IEEEexample:BSTcontrol}
\title{Wireless Linear Computation Broadcast}

\author{Shuo~Tan,~\IEEEmembership{Graduate~Student~Member,~IEEE,}
        and~Syed~A.~Jafar,~\IEEEmembership{Fellow,~IEEE}
\thanks{The authors are with the Department of Electrical Engineering and Computer Science, University of California at Irvine, Irvine, CA 92697 USA (e-mail: shuot8@uci.edu; syed@uci.edu).}}

\maketitle
\begin{abstract}
A linear computation broadcast (LCBC) problem comprises $K$ users (receivers) and a transmitter. The users wish to compute various (vector) linear functions of a common dataset, and possess in advance heterogeneous side information corresponding to various other linear computations. The goal for the broadcast transmitter, who knows the dataset and all desired and side-information functions, is to satisfy all demands as efficiently as possible. Prior work has explored the information-theoretic capacity of LCBC for zero-error finite-field computation over an ideal (noiseless) broadcast channel (BC). This paper develops a wireless LCBC (WLCBC) framework for the Gaussian MIMO BC with noisy receiver-side information under a transmit power constraint and arbitrary antenna configurations. 
In the WLCBC setting, a Gaussian source is linearly precoded for broadcast; each user requests a linear function of the source and forms an estimate by linearly combining its channel observation with its noisy side information.
Assuming perfect channel state information, we cast the centralized joint linear transceiver design as a weighted sum-MSE minimization problem and propose an efficient alternating optimization algorithm.
For a fixed precoder, the optimal decoders are the linear minimum mean-square error (LMMSE) estimators.
For fixed decoders, the precoder update reduces to a convex quadratically constrained quadratic program which leads to a semi-closed-form solution parameterized by a single dual variable. The resulting algorithm guarantees a monotonic decrease in the objective and convergence of the weighted sum-MSE (WSMSE) objective sequence.
Simulations demonstrate pronounced robustness gains over natural baselines obtained from prior works.
\end{abstract}

\begin{IEEEkeywords}
Linear computation broadcast, Gaussian MIMO broadcast channel, noisy side information, weighted sum-MSE, transceiver optimization.
\end{IEEEkeywords}

\section{Introduction}\label{sec:intro}

Artificial intelligence (AI) is reshaping communication systems toward learning-assisted and computation-centric data dissemination~\cite{letaief2019roadmap,zhu2020toward,saad2019vision,gunduz2022beyond}.
The coupling of communication and computation has been predominantly explored in uplink multiple-access channel (MAC) settings, where waveform superposition facilitates over-the-air function computation~\cite{nazer2007computation,nazer2011compute} and model aggregation~\cite{yang2020federated,seif2025collaborative}.

The downlink counterpart presents a distinct challenge: the transmitter must simultaneously serve heterogeneous functional demands across users, each with user-specific \emph{side information}.

An information-theoretic abstraction for downlink function delivery is \emph{Computation Broadcast} (CBC)~\cite{sun2019capacity}, which studies the dissemination of user-specific functions of a common source in the presence of receiver-dependent side information.
This abstraction aligns with distributed edge deployments---e.g., autonomous driving~\cite{zhou_you_huang_2026}, UAV swarms~\cite{wei2025integrated}, and VR systems~\cite{kong2022edge}---where devices extract demand-relevant transforms of shared observations while leveraging locally cached context or auxiliary measurements~\cite{shi2020communication}.
A conceptually analogous situation arises when receiver-side cached model components serve as side information, and the network delivers low-dimensional demand descriptors (e.g., via LoRA-style adaptation~\cite{hu2022lora}).

Existing progress on CBC, especially its linear specialization, \emph{Linear Computation Broadcast} (LCBC), where the demands and side information are linear functions, has mainly focused on the information-theoretic capacity of LCBC for zero-error finite-field computation over an ideal (noiseless) broadcast channel (BC). The first CBC scheme in~\cite{sun2019capacity} is restricted to the $2$-user case.
The $3$-user LCBC solution in~\cite{yao2024capacity} is derived from linear rank inequalities for three desired spaces and three corresponding side-information spaces. Beyond three users, a symmetric $K$-user LCBC construction is developed in~\cite{yao2024generic} utilizing asymptotic interference alignment~\cite{cadambe2008interference}. An achievable scheme for general $K$-user LCBC was proposed in~\cite{ma2025achievable} using linear-programming-based design principles inspired by the $3$-user characterization in~\cite{yao2024capacity}.

Consequently, translating information-theoretic LCBC schemes into practical wireless downlink designs remains challenging.
LCBC schemes thus far are primarily developed over finite fields and under a zero-error criterion, whereas practical applications often require real-valued (over $\mathbb{R}$)~\cite{bishop2006pattern} and approximate computation under distortion metrics such as the mean-square error (MSE).
Moreover, existing LCBC formulations are largely physical-layer-agnostic: they model the broadcast medium as a noiseless common information pipe, \added{i.e., the trivial-channel setting in which every user receives the same broadcast information~\cite{sun2019capacity,yao2024capacity,yao2024generic,ma2025achievable}. They therefore} do not jointly account for MIMO antenna configurations, channel noise, and transmit power constraints, which are important considerations in wireless communications~\cite{nazer2011compute,lim2011noisy,nazer2016expanding}.
\added{To the best of our knowledge, the corresponding information-theoretic LCBC problem with user-dependent non-trivial channels remains open. By explicitly accounting for a distinct MIMO channel at each user, WLCBC provides a significant stepping stone toward this broader setting.}

\begin{addedtext}
Conventional MIMO downlink designs, such as weighted minimum mean-square error (WMMSE) \cite{christensen2008weighted,shi2011iteratively}, optimize linear filters for independent data streams, treating unintended signals as interference. In our setting, however, users compute correlated linear functions of a shared source. Consequently, a transmitted spatial direction can simultaneously benefit multiple users, with its utility governed by their heterogeneous side information.

Information-theoretic broadcast schemes using Slepian--Wolf, index, or lattice coding \cite{tuncel2006slepian,lee2015index,natarajan2015lattice,natarajan2018lattice} also exploit receiver side information. Yet, they inherently target the exact recovery of discrete messages, fundamentally differing from our objective of distortion-minimizing continuous estimation.

Closer to our continuous-amplitude setting are Wyner--Ziv and hybrid source--channel coding schemes \cite{wyner1976rate,costa1983writing,kochman2009joint,wilson2010joint}. While these works establish asymptotic limits for reconstruction under side information, they do not translate to finite-dimensional linear transceiver designs. Specifically, they lack mechanisms to jointly manage MIMO channels, noisy side observations, and user-specific computation demands.

To bridge this gap, we propose the wireless LCBC (WLCBC) framework for Gaussian MIMO broadcast channels. We jointly optimize the linear precoder and decoders to minimize the weighted sum-MSE (WSMSE) under a sum-power constraint. Crucially, each user's noisy side information induces a posterior residual covariance \(\bm\Pi_i\) that carefully weights the source directions based on their remaining uncertainty. Consequently, while the resulting block-coordinate updates retain familiar LMMSE and QCQP structures, their core mathematical operators are fundamentally shaped by the geometric interplay of user demands and side information.
\end{addedtext}

The main contributions are as follows:
\begin{itemize}
    \item \textbf{Centralized linear Gaussian MIMO formulation with noisy side information \added{and user-dependent channels}:}
    We formulate LCBC with noisy side information over a $K$-user real Gaussian MIMO broadcast channel under a transmit power constraint and perfect CSI, \added{explicitly accounting for a distinct MIMO channel at each user, as well as} heterogeneous antenna configurations and user-dependent noise levels.
    Each receiver has imperfect side information modeled as a correlated linear observation corrupted by additive Gaussian noise.
    Using the MSE between each user's demand and its estimate as the distortion metric, we obtain a centralized WSMSE-minimization framework that jointly optimizes the transmit precoder and the per-user linear decoders, while accounting for noisy channels and side information.

    \item \textbf{\added{WLCBC-aware LMMSE decoding via residual subspace geometry:}}
    \begin{addedtext}
        For a fixed precoder, the optimal LMMSE decoder is intrinsically shaped by the posterior residual covariance $\bm\Pi_i$. Geometrically, rather than enforcing a rigid orthogonal projection, $\bm\Pi_i$ acts as a structure that weights the demanded LCBC directions by their remaining uncertainty---smoothly attenuating well-observed components while preserving unobserved ones. Computationally, coupling this residual geometry with a thin-SVD reduction yields an efficient Schur-complement update, completely circumventing the high-dimensional inversion of the full composite covariance matrix.
    \end{addedtext}

    \item \textbf{Convex QCQP precoder update with a single dual variable:}
    For fixed decoders, we show that the precoder design reduces to a convex \emph{quadratically constrained quadratic program (QCQP)}.
    By analyzing the KKT conditions, we obtain a semi-closed-form solution parameterized by a single optimal Lagrange multiplier.
    This reduces the high-dimensional precoder optimization to a one-dimensional root-finding problem, which can be solved efficiently using a safeguarded Newton--bisection search.

    \item \textbf{Alternating optimization with monotonic descent and KKT characterization:}
    Combining the decoder and precoder updates yields an alternating optimization procedure with parallel per-user decoder updates. The resulting WSMSE objective sequence is monotonically non-increasing and convergent; moreover, every accumulation point of the iterate sequence satisfies the KKT conditions of the joint WSMSE design problem.
\end{itemize}

The remainder of this paper is organized as follows.
Section~\ref{sec:model} presents the system model.
Section~\ref{sec:method} develops the alternating optimization algorithm, with convergence and optimality proofs given in Section~\ref{sec:derivations}.
Section~\ref{sec:numerical_results} presents numerical simulations, and Section~\ref{sec:conclusion} concludes the paper.

\noindent\textit{Notation:}
Bold lowercase and uppercase letters denote vectors and matrices, respectively.
$(\cdot)^\top$ denotes transpose, $\mathbf I_n$ denotes the $n\times n$ identity matrix, and $\|\cdot\|_F$ denotes the Frobenius norm.
$\mathcal N(\mathbf 0,\mathbf \Sigma)$ denotes a real Gaussian distribution with mean $\mathbf 0$ and covariance $\mathbf \Sigma$.
For a matrix $\mathbf A$, $\rank(\mathbf A)$, and $\tr(\mathbf A)$ denote its rank and trace, respectively. For matrices $\mathbf A_1,\dots,\mathbf A_n$ the notation $\blkdiag(\mathbf A_1,\dots,\mathbf A_n)$ represents their block-diagonal concatenation.
The Kronecker product is denoted by $\otimes$.
\begin{figure*}[htbp]
  \centering
  \includegraphics[width=\linewidth]{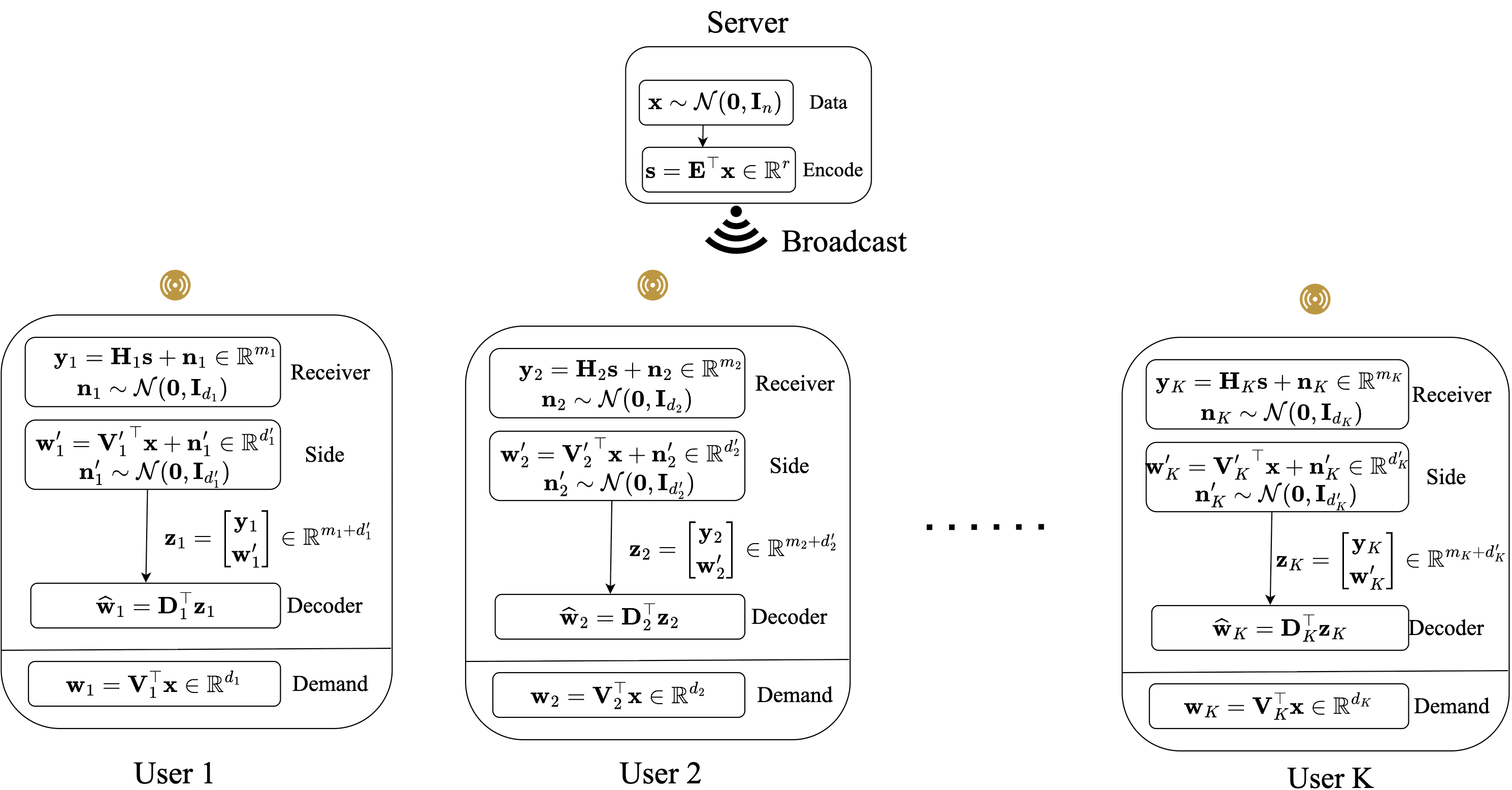}
  \caption{$K$-user WLCBC system model. The transmitter precodes $\mathbf{x}$ into $\mathbf{s}=\mathbf{E}^\top \mathbf{x}$; user $i$ combines $(\mathbf{y}_i,\mathbf{w}_i')$ to estimate $\mathbf{w}_i=\mathbf{V}_i^\top \mathbf{x}$.}
  \label{fig:system_model}
\end{figure*}
\section{System Model}\label{sec:model}
For a $K$-user WLCBC system, this section specifies (i) the Gaussian source
and users' linear computation demands, (ii) side information available at
each user, (iii) the $T$-use MIMO broadcast channel and its block-stacked
representation, and (iv) the linear transceiver architecture together with
the WSMSE design objective. 
\begin{addedtext}
Table~\ref{tab:dimension_guide} summarizes the key notation and dimensions used throughout the system model.
\end{addedtext}
\begin{table}[!ht]
\caption{\added{Dimension guide for the WLCBC system model.}}
\label{tab:dimension_guide}
\centering
\begin{addedtext}
\footnotesize
\setlength{\tabcolsep}{2pt}
\renewcommand{\arraystretch}{1.12}
\begin{tabular}{@{}p{0.38\columnwidth}p{0.56\columnwidth}@{}}
\toprule
\textbf{Notation} & \textbf{Definition and dimension} \\
\midrule
$T$, $M$, $N_i$
& Number of channel uses, transmit antennas, and receive antennas of user $i$. \\

$r=TM$, $m_i=TN_i$
& Block-extended transmit and receive dimensions. \\

$d_i$, $d_i'$
& Demand and side-information dimensions. \\

$\mathbf{x}\in\mathbb{R}^n$
& Common Gaussian source vector. \\

$\mathbf{V}_i\in\mathbb{R}^{n\times d_i}$; \newline
$\mathbf{w}_i=\mathbf{V}_i^\top\mathbf{x}\in\mathbb{R}^{d_i}$
& Demand matrix and target linear function. \\

$\mathbf{V}_i'\in\mathbb{R}^{n\times d_i'}$; \newline
$\mathbf{w}_i'={\mathbf{V}_i'}^\top\mathbf{x}+\mathbf{n}_i'\in\mathbb{R}^{d_i'}$

& Side-information matrix and noisy observation. \\
$\mathbf{E}\in\mathbb{R}^{n\times r}$; \newline
$\mathbf{s}=\mathbf{E}^\top\mathbf{x}\in\mathbb{R}^r$
& Centralized block precoder and transmit vector. \\

$\mathbf{H}_i\in\mathbb{R}^{m_i\times r}$; \newline
$\mathbf{y}_i=\mathbf{H}_i\mathbf{s}+\mathbf{n}_i\in\mathbb{R}^{m_i}$
& Block-stacked MIMO channel and received vector. \\
$\mathbf{z}_i\in\mathbb{R}^{m_i+d_i'}$
& Composite observation obtained by stacking $\mathbf{y}_i$ and $\mathbf{w}_i'$. \\

$\mathbf{D}_i\in\mathbb{R}^{(m_i+d_i')\times d_i}$; \newline
$\widehat{\mathbf{w}}_i=\mathbf{D}_i^\top\mathbf{z}_i$
& Linear decoder and target-function estimate. \\
\bottomrule
\end{tabular}
\end{addedtext}
\end{table}
\subsection{Source and computation demands}\label{subsec:source_demand}
We model the common source vector $\mathbf{x}\in\mathbb{R}^n$ as an
isotropic Gaussian random vector,
$\mathbf{x}\sim\mathcal{N}(\mathbf{0},\mathbf{I}_n)$.
Let $\mathcal{K}\triangleq\{1,\dots,K\}$ denote the user set.
Each user $i\in\mathcal{K}$ aims to compute a linear function of
$\mathbf{x}$:
\begin{equation}\label{eq:target_function}
  \mathbf{w}_i = \mathbf{V}_i^\top \mathbf{x} \in \R^{d_i},
\end{equation}
where $\mathbf{V}_i \in \R^{n \times d_i}$ is the demand matrix and $d_i$
is the demand dimension.

\subsection{MIMO broadcast channel}\label{subsec:channel}
Consider transmission from an $M$-antenna transmitter to an $N_i$-antenna user $i$ over a block of $T$ channel uses. 
\begin{addedtext}
To evaluate the distortion--communication tradeoff consistently across $T$, we assume a common underlying channel sequence $\{\mathbf H_i^{(\mathrm p)}[t]\}_{t\ge 1}$. A $T$-use model employs the first $T$ blocks, thereby strictly nesting the $T$-use system within the $(T+1)$-use system.
\end{addedtext}

\subsubsection{Per-use signal model}
At channel use $t\in\{1,\dots,T\}$, the transmitted signal is
$\mathbf{s}[t]\in\mathbb{R}^{M}$, and the received signal at user $i$ is
\begin{addedtext}
\begin{equation}\label{eq:per_use_channel}
  \mathbf{y}_i[t]
  =
  \mathbf{H}_i^{(\mathrm{p})}[t] \, \mathbf{s}[t]
  + \mathbf{n}_i[t],
  \qquad
  \mathbf{n}_i[t]\sim
  \mathcal{N}(\mathbf{0},\sigma_{c,i}^2\mathbf{I}_{N_i}),
\end{equation}
where $\mathbf{H}_i^{(\mathrm{p})}[t]\in\mathbb{R}^{N_i\times M}$ denotes
the per-use physical channel matrix and $\sigma_{c,i}^2$ is the channel-noise
variance.
\end{addedtext}
\subsection{Correlated side information}\label{subsec:sideinfo}
User $i$ is aided by correlated side information modeled as a noisy linear
observation of the source:
\begin{addedtext}
\begin{equation}\label{eq:side_info}
  \mathbf{w}_i' = {\mathbf{V}_i'}^\top \mathbf{x} + \mathbf{n}_i'
  \in \R^{d_i'}, \qquad
  \mathbf{n}_i' \sim
  \mathcal{N}(\mathbf{0},\sigma_{s,i}^2\mathbf{I}_{d_i'}).
\end{equation}
Here, $\mathbf{V}_i'\in\R^{n\times d_i'}$ characterizes the
side-information subspace, while $\sigma_{s,i}^2$ is the corresponding
noise variance.
\end{addedtext}
\subsubsection{Block stacking and extended dimensions}
Define the block-extended transmit dimension $r \triangleq TM$ and the
block-extended receive dimension of user $i$ as $m_i \triangleq TN_i$.
We stack the per-use signals as
\begin{equation}\label{eq:stacking_def}
\begin{aligned}
  \mathbf{s} &\triangleq
  \begin{bmatrix}\mathbf{s}[1]\\ \vdots\\ \mathbf{s}[T]\end{bmatrix}
  \in \mathbb{R}^{r}, &
  \mathbf{y}_i &\triangleq
  \begin{bmatrix}\mathbf{y}_i[1]\\ \vdots\\ \mathbf{y}_i[T]\end{bmatrix}
  \in \mathbb{R}^{m_i}.
\end{aligned}
\end{equation}
With this representation, \eqref{eq:per_use_channel} becomes
\begin{addedtext}
\begin{equation}\label{eq:channel_model}
\begin{aligned}
    \mathbf{y}_i &= \mathbf{H}_i\mathbf{s} + \mathbf{n}_i \in \mathbb{R}^{m_i}, \qquad \mathbf{n}_i \sim \mathcal{N}(\mathbf{0},\sigma_{c,i}^2\mathbf{I}_{m_i}), \\
    \mathbf{H}_i &\triangleq \mathrm{blkdiag}\!\bigl(\mathbf{H}_i^{(\mathrm{p})}[1],\dots,\mathbf{H}_i^{(\mathrm{p})}[T]\bigr) \in \mathbb{R}^{m_i\times r}.
\end{aligned}
\end{equation}
We write $\mathbf H_i^{[T]}$ when explicitly emphasizing the dependence on $T$, suppressing the superscript when $T$ is fixed.
\end{addedtext}

\subsubsection{Block-fading special case}
If $\mathbf{H}_i^{(\mathrm{p})}[t]\equiv \mathbf{H}_i^{(\mathrm{p})}$ for all
$t$, then
$\mathbf{H}_i = \mathbf{I}_T \otimes \mathbf{H}_i^{(\mathrm{p})}$.

\subsection{Linear precoding and power constraint}\label{subsec:precoding}
The transmitter adopts a linear precoding structure across the $T$ channel
uses via a precoding matrix $\mathbf{E}\in\R^{n\times r}$:
\begin{equation}\label{eq:precoder}
  \mathbf{s} = \mathbf{E}^\top \mathbf{x} \in \R^r.
\end{equation}
Equivalently, $\mathbf{E}=[\mathbf{E}[1],\dots,\mathbf{E}[T]]$ with
$\mathbf{E}[t]\in\R^{n\times M}$, so that
$\mathbf{s}[t]=\mathbf{E}[t]^\top\mathbf{x}$.

Let $P_{\mathrm{avg}}$ denote the average transmit-power budget per
antenna--time coordinate. 
\begin{equation}\label{eq:power_constraint}
    \frac{1}{TM}\sum_{t=1}^T \mathbb{E}\!\left[\|\mathbf{s}[t]\|^2\right]
    =
    \frac{1}{r}\mathbb{E}\!\left[\|\mathbf{s}\|^2\right]
    =
    \frac{1}{r}\|\mathbf{E}\|_F^2
    \le P_{\mathrm{avg}}.
\end{equation}
We normalize $P_{\mathrm{avg}}=1$, so the block power constraint is
$\|\mathbf{E}\|_F^2\le r$. \begin{addedtext}
All noise variables are mutually independent, and independent of the source $\mathbf{x}$.
\end{addedtext}

\subsection{Composite observation and linear decoding}\label{subsec:decoding}
User $i$ stacks its channel observation and side information as
\begin{equation}\label{eq:stacked_obs}
  \mathbf{z}_i \triangleq
  \begin{bmatrix}
  \mathbf{y}_i \\
  \mathbf{w}_i'
  \end{bmatrix}
  \in \R^{m_i+d_i'}.
\end{equation}
Combining \eqref{eq:channel_model}, \eqref{eq:precoder}, and
\eqref{eq:side_info} gives the linear Gaussian model
\begin{addedtext}
\begin{equation}\label{eq:equiv_linear_gaussian}
\mathbf{z}_i
=
\begin{bmatrix}
\mathbf{H}_i \mathbf{E}^\top \\
{\mathbf{V}_i'}^\top
\end{bmatrix}
\mathbf{x}
+
\widehat{\mathbf n}_i,
\;
\widehat{\mathbf n}_i\sim
\mathcal N\!\left(
\mathbf 0,
\operatorname{blkdiag}\!\left(
\sigma_{c,i}^2\mathbf I_{m_i},
\sigma_{s,i}^2\mathbf I_{d_i'}
\right)
\right).
\end{equation}
\end{addedtext}
Since $(\mathbf w_i,\mathbf z_i)$ are jointly Gaussian, the MMSE estimator
coincides with the LMMSE estimator~\cite{poor2013introduction}. Accordingly,
user $i$ employs a linear decoder
$\mathbf{D}_i\in\R^{(m_i+d_i')\times d_i}$ and forms
\begin{equation}\label{eq:estimator}
\widehat{\mathbf{w}}_i = \mathbf{D}_i^\top \mathbf{z}_i.
\end{equation}
The corresponding MSE is
\begin{equation}\label{eq:mse_def}
\mathcal{E}_i(\mathbf{E},\mathbf{D}_i)
\triangleq
\mathbb{E}\!\left[
\|\mathbf{w}_i-\mathbf{D}_i^\top\mathbf{z}_i\|^2
\right].
\end{equation}

\begin{addedtext}
Conformally partitioning the decoder according to \eqref{eq:stacked_obs} yields
\begin{equation}\label{eq:decoder_partition}
\mathbf D_i
=
\begin{bmatrix}
\mathbf D_{i,y}\\
\mathbf D_{i,w}
\end{bmatrix},
\qquad
\mathbf D_{i,y}\in\mathbb{R}^{m_i\times d_i},\quad
\mathbf D_{i,w}\in\mathbb{R}^{d_i'\times d_i}.
\end{equation}
Then the MSE exactly decomposes as
\begin{align}
\mathcal E_i(\mathbf E,\mathbf D_i)
={}&
\left\|
\mathbf V_i
-
\mathbf E\mathbf H_i^\top\mathbf D_{i,y}
-
\mathbf V_i'\mathbf D_{i,w}
\right\|_F^2
\nonumber\\
&+
\sigma_{c,i}^2\|\mathbf D_{i,y}\|_F^2
+
\sigma_{s,i}^2\|\mathbf D_{i,w}\|_F^2.
\label{eq:mse_lcbc_decomp}
\end{align}

The first term represents the squared residual of the noiseless algebraic recovery condition, while the remaining terms capture the noise amplification induced by the decoder. At finite SNR, enforcing exact algebraic alignment may be severely suboptimal if it dictates ill-conditioned decoder matrices. Consequently, \eqref{eq:mse_lcbc_decomp} reveals WLCBC as a physical-layer regularization of the classical LCBC problem, optimally balancing algebraic accuracy against noise robustness.
\end{addedtext}
\begin{addedtext}
\subsection{Centralized Information Assumptions and Local Execution}
\label{subsec:centralized_info}

We adopt a centralized-design and local-execution framework.
The transmitter is assumed to know
$
\{\mathbf H_i,\mathbf V_i,\mathbf V_i',
\omega_i,\sigma_{c,i}^2,\sigma_{s,i}^2\}_{i\in\mathcal K},
$
together with the transmit-power budget, but not the realizations of the
receiver noise terms $\mathbf n_i$ and $\mathbf n_i'$.

Using this information, the transmitter jointly computes the optimal precoder
$\mathbf E$ and the user-specific decoders
$\{\mathbf D_i\}_{i\in\mathcal K}$.
The transmitter uses $\mathbf E$ for transmission, and receiver $i$ forms
$
\widehat{\mathbf w}_i
=
(\mathbf D_i)^\top\mathbf z_i
$
from its local observation.
Thus, during execution, receiver $i$ requires only
$(\mathbf z_i,\mathbf D_i)$.
\end{addedtext}
\subsection{WSMSE Transceiver Design Objective}
\label{subsec:problem}
Our goal is to jointly design the linear precoder $\mathbf{E}$ and the per-user decoding matrices
$\{\mathbf{D}_i\}_{i\in\mathcal{K}}$ to minimize a weighted sum of users' MSEs under the transmit power constraint.
Define the global \emph{weighted sum-MSE} (WSMSE) objective as
\begin{equation}\label{eq:wsmse_objective}
F(\mathbf{E},\{\mathbf{D}_i\}) \triangleq \sum_{i\in\mathcal{K}} \omega_i\,\mathcal{E}_i(\mathbf{E},\mathbf{D}_i),
\end{equation}
where $\omega_i > 0$ denotes the positive weight for user $i$.
The joint transceiver design problem is formulated as
\begin{subequations}\label{eq:prob_joint}
\begin{align}
\min_{\mathbf{E},\,\{\mathbf{D}_i\}_{i\in\mathcal{K}}}\quad & F(\mathbf{E},\{\mathbf{D}_i\}) \label{eq:prob_joint_obj}\\
\text{s.t.}\quad & \|\mathbf{E}\|_F^2 \le r. \label{eq:prob_joint_pow}
\end{align}
\end{subequations}
\begin{remark}[Canonical normalization]
\label{rem:canonical}
Without loss of generality, we use the canonical normalization
$\mathbf x\sim\mathcal N(\mathbf 0,\mathbf I_n)$ and identity noise covariances.
For any full-rank source/noise covariances, invertible whitening maps reduce the model to this form; the induced linear transformations are absorbed into
$(\mathbf V_i,\mathbf V_i',\mathbf H_i,\mathbf D_i)$ and into the corresponding power constraint.
\end{remark}

The entire framework is illustrated in Fig.~\ref{fig:system_model}.

\section{Proposed Alternating Optimization}\label{sec:method}
We solve the WSMSE problem in \eqref{eq:prob_joint} via \emph{alternating optimization} (AO), i.e., exact block coordinate descent
over the precoder $\mathbf E$ and the decoder set $\{\mathbf D_i\}_{i\in\mathcal K}$.
Although \eqref{eq:prob_joint} is not jointly convex, it is convex in each block separately:
(i) for fixed $\mathbf E$, each user decoder update is a strictly convex quadratic problem admitting a unique LMMSE solution;
(ii) for fixed $\{\mathbf D_i\}$, the precoder update reduces to a convex QCQP with a KKT-characterized global minimizer.
We first perform a one-time preprocessing that removes noise-only side-information dimensions (without loss of optimality),
then present the decoder and precoder updates, and finally summarize the overall AO procedure in Algorithm~\ref{alg:wsmse_solver_compact}.
All proofs are deferred to Section~\ref{sec:derivations}.

Throughout this section, we adopt the stacked representation in Section~\ref{sec:model}:
$r=TM$, $m_i=TN_i$, and $\mathbf H_i\in\mathbb R^{m_i\times r}$ denotes the stacked channel matrix.

\subsection{Preprocessing: Rank reduction of noisy side information}\label{subsec:svd_reduction}
For user $i$, let $\widetilde{d}_i'\triangleq\rank(\mathbf V_i')$ and take the thin SVD
\begin{equation}\label{eq:svd_Vip}
\mathbf V_i'=\mathbf U_i'\bm\Sigma_i'{\mathbf T_i'}^\top,
\end{equation}
where $\mathbf U_i'\in\mathbb R^{n\times \widetilde{d}_i'}$ and $\mathbf T_i'\in\mathbb R^{d_i'\times \widetilde{d}_i'}$ have orthonormal columns and
$\bm\Sigma_i'\in\mathbb R^{\widetilde{d}_i'\times \widetilde{d}_i'}$ is diagonal.

Premultiplying \eqref{eq:side_info} by ${\mathbf T_i'}^\top$ yields the reduced side observation
\[
\widetilde{\mathbf w}_i' \triangleq {\mathbf T_i'}^\top \mathbf w_i'
= \bm\Sigma_i'\mathbf U_i'^\top \mathbf x + \widetilde{\mathbf n}_i',
\qquad
\widetilde{\mathbf n}_i' \sim \mathcal N(\mathbf 0,\mathbf I_{\widetilde{d}_i'}),
\]
since ${\mathbf T_i'}^\top$ preserves whiteness of $\mathbf n_i'$.

\begin{lemma}[Removal of noise-only side-information dimensions]\label{lem:wlog_side_reduction}
The orthogonal complement of $\Span(\mathbf T_i')$ contains pure noise independent of $\mathbf x$. It thus does not affect
(i) the optimal LMMSE decoder for any fixed $\mathbf E$, nor (ii) the optimal value of \eqref{eq:prob_joint}.
Hence, it is without loss of optimality to discard the noise-only dimensions and work with
$\widetilde{\mathbf z}_i \triangleq [\mathbf y_i^\top,(\widetilde{\mathbf w}_i')^\top]^\top$.
\end{lemma}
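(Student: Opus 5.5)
Complete $\mathbf T_i'$ to an orthogonal matrix $\mathbf Q_i\triangleq[\mathbf T_i',\mathbf T_i'^{\perp}]\in\mathbb R^{d_i'\times d_i'}$, where $\mathbf T_i'^{\perp}$ spans the orthogonal complement of $\Span(\mathbf T_i')$. The key observation is that ${\mathbf T_i'^{\perp}}^\top\mathbf V_i'=\mathbf 0$, because $\mathbf V_i'=\mathbf U_i'\bm\Sigma_i'{\mathbf T_i'}^\top$ has row space $\Span(\mathbf T_i')$. Hence the rotated side observation splits as
\[
\mathbf Q_i^\top\mathbf w_i'=\begin{bmatrix}\widetilde{\mathbf w}_i'\\ \bar{\mathbf n}_i'\end{bmatrix},\qquad \bar{\mathbf n}_i'\triangleq{\mathbf T_i'^{\perp}}^\top\mathbf n_i'.
\]
Since $\mathbf Q_i$ is orthogonal and $\mathbf n_i'$ is white, $\widetilde{\mathbf n}_i'$ and $\bar{\mathbf n}_i'$ are jointly Gaussian with cross-covariance ${\mathbf T_i'}^\top\mathbf T_i'^{\perp}\sigma_{s,i}^2=\mathbf 0$. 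They are therefore independent. Moreover, $\bar{\mathbf n}_i'$ is independent of $\mathbf x$ and $\mathbf n_i$ by the mutual-independence assumption in Section~\ref{subsec:precoding}. This establishes the first sentence of the lemma: the complement carries pure noise, independent of $(\mathbf x,\mathbf y_i,\widetilde{\mathbf w}_i')$.

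For part (i), fix $\mathbf E$. The map $\mathbf z_i\mapsto\operatorname{blkdiag}(\mathbf I_{m_i},\mathbf Q_i^\top)\mathbf z_i=[\widetilde{\mathbf z}_i^\top,\bar{\mathbf n}_i'^\top]^\top$ is invertible, so the class of linear estimators is unchanged: any decoder $\mathbf D_i$ corresponds bijectively to $\mathbf G_i=[\mathbf G_{i,1}^\top,\mathbf G_{i,2}^\top]^\top$ acting on the rotated observation. Substituting into \eqref{eq:mse_def}, the cross terms between $\mathbf w_i-\mathbf G_{i,1}^\top\widetilde{\mathbf z}_i$ and $\mathbf G_{i,2}^\top\bar{\mathbf n}_i'$ vanish by independence and zero mean. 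The MSE thus becomes
\[
\mathbb E\|\mathbf w_i-\mathbf G_{i,1}^\top\widetilde{\mathbf z}_i\|^2+\sigma_{s,i}^2\|\mathbf G_{i,2}\|_F^2 .
\]
This is minimized with $\mathbf G_{i,2}=\mathbf 0$, and the remaining minimization over $\mathbf G_{i,1}$ is exactly the LMMSE problem based on $\widetilde{\mathbf z}_i$. So the optimal decoder ignores the complement, and the optimal MSE for each $\mathbf E$ coincides with the reduced one. Equivalently, one can read this off from \eqref{eq:mse_lcbc_decomp} by writing $\mathbf D_{i,w}=\mathbf T_i'\mathbf A+\mathbf T_i'^{\perp}\mathbf B$. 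Then $\mathbf V_i'\mathbf D_{i,w}$ depends only on $\mathbf A$, while $\|\mathbf D_{i,w}\|_F^2=\|\mathbf A\|_F^2+\|\mathbf B\|_F^2$, so $\mathbf B=\mathbf 0$ is optimal. This form also handles $\sigma_{s,i}^2$ generally, with the paper's normalization setting it to $1$.

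For part (ii), note that the feasible set $\|\mathbf E\|_F^2\le r$ does not involve the decoders. Hence $\min_{\mathbf E,\{\mathbf D_i\}}F=\min_{\mathbf E}\sum_i\omega_i\min_{\mathbf D_i}\mathcal E_i(\mathbf E,\mathbf D_i)$. By (i), each inner minimum equals its reduced counterpart for every $\mathbf E$, so the two outer problems have identical objectives and hence identical optimal values and minimizers $\mathbf E$. Any reduced decoder lifts to a full decoder by zero-padding, $\mathbf D_{i,w}=\mathbf T_i'\mathbf G_{i,1,w}$, which achieves the same MSE. Conversely, every full decoder is weakly dominated by its projection onto $\Span(\mathbf T_i')$. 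No step is really hard; the only point needing care is the independence argument, namely that the orthogonality of $\mathbf Q_i$ preserves whiteness and that $\mathbf T_i'^{\perp}$ annihilates $\mathbf V_i'^\top\mathbf x$. I would state that explicitly before the MSE split.
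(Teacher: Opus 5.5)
Your proposal is correct and follows essentially the paper's route: complete $\mathbf T_i'$ to an orthogonal matrix, observe that the complementary coordinates are pure noise, and zero their decoder coefficients. It is in fact more careful than the paper, which only notes independence from $\mathbf x$ and loosely says that zeroing the coefficient leaves the MSE unaffected, whereas you verify independence from $\widetilde{\mathbf n}_i'$ and exhibit the exact penalty $\sigma_{s,i}^2\|\mathbf G_{i,2}\|_F^2$. One trivial slip: ${\mathbf T_i'^{\perp}}^\top\mathbf V_i'$ does not conform dimensionally; the identity you mean, and later use, is $\mathbf V_i'\mathbf T_i'^{\perp}=\mathbf 0$.
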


\begin{remark}[Notation reset after reduction]\label{rem:dim_reset}
Define the reduced side-information matrix $\widetilde{\mathbf V}_i' \triangleq \mathbf U_i'\bm\Sigma_i'\in\mathbb R^{n\times \widetilde{d}_i'}$ so that
$\widetilde{\mathbf w}_i' = (\widetilde{\mathbf V}_i')^\top\mathbf x + \widetilde{\mathbf n}_i'$.
After discarding the noise-only dimensions, we \emph{redefine}
\[
d_i' \leftarrow \widetilde{d}_i',\quad \mathbf w_i' \leftarrow \widetilde{\mathbf w}_i',\quad \mathbf z_i \leftarrow \widetilde{\mathbf z}_i,\quad
\mathbf V_i' \leftarrow \widetilde{\mathbf V}_i'.
\]
We still retain $(\mathbf U_i',\bm\Sigma_i')$ since they will be repeatedly used in the efficient decoder update.
\end{remark}

\subsection{Decoder update: Optimal decoder for fixed precoder}\label{subsec:opt_decoder}
Fix any feasible $\mathbf E$ and define $\mathbf R_i \triangleq \mathbf H_i\mathbf E^\top$.
Under the stacked MIMO model, $\mathbf E\in\mathbb R^{n\times r}$ with $r=TM$ can be partitioned as
$\mathbf E=[\mathbf E[1],\dots,\mathbf E[T]]$ with $\mathbf E[t]\in\mathbb R^{n\times M}$.
Then $\mathbf R_i$ can be formed blockwise without explicitly building the large stacked matrix $\mathbf H_i$:
\begin{equation}\label{eq:Ri_block_form}
  \mathbf R_i =
  \begin{bmatrix}
    \mathbf H_i^{(\mathrm{p})}[1]\,\mathbf E[1]^\top\\
    \vdots\\
    \mathbf H_i^{(\mathrm{p})}[T]\,\mathbf E[T]^\top
  \end{bmatrix}
  \in \mathbb{R}^{m_i \times n}.
\end{equation}

Using the reduced side information (Remark~\ref{rem:dim_reset}), the stacked observation admits the linear Gaussian form
\begin{equation}\label{eq:Ai_def_restate}
  \mathbf z_i = \mathbf A_i \mathbf x + \hat{\mathbf n}_i,
  \,
  \mathbf A_i \triangleq
  \begin{bmatrix} \mathbf R_i \\ (\mathbf V_i')^\top \end{bmatrix},
  \,
  \hat{\mathbf n}_i \sim \mathcal N(\mathbf 0,\mathbf I_{m_i+d_i'}).
\end{equation}

\begin{proposition}[LMMSE decoder and efficient evaluation]\label{prop:lmmse_decoder}
For fixed $\mathbf E$, the unique minimizer of $\mathcal E_i(\mathbf E,\mathbf D_i)$ is
\begin{equation}\label{eq:Gi_star_closed_form}
\mathbf{D}_i^\star
=
\big(\mathbf{A}_i\mathbf{A}_i^\top+\mathbf{I}_{m_i+d_i'}\big)^{-1}\mathbf{A}_i\mathbf{V}_i.
\end{equation}
Moreover, partition $\mathbf D_i^\star$ conformably with $\mathbf z_i=[\mathbf y_i^\top,(\mathbf w_i')^\top]^\top$ as
\[
\mathbf D_i^\star=\begin{bmatrix}\mathbf D_{i,y}^\star\\ \mathbf D_{i,w}^\star\end{bmatrix},
\quad
\mathbf D_{i,y}^\star\in\mathbb R^{m_i\times d_i},\ \mathbf D_{i,w}^\star\in\mathbb R^{d_i'\times d_i},
\]
and define
\begin{equation}\label{eq:system_matrices}
\begin{aligned}
\bm{\Phi}_{11} &\triangleq \mathbf I_{d_i'} + (\bm\Sigma_i')^2,\\
\bm{\Lambda}_i &\triangleq (\bm\Sigma_i')^2 \bm{\Phi}_{11}^{-1},\\
\bm{\Pi}_i &\triangleq \mathbf I_n-\mathbf U_i'\bm{\Lambda}_i{\mathbf U_i'}^\top \succ \mathbf 0.
\end{aligned}
\end{equation}
Then $\mathbf D_i^\star$ can be computed without inverting an $(m_i+d_i')\times(m_i+d_i')$ matrix via
\begin{align}
\mathbf{D}_{i,y}^\star
&=
\big(\mathbf{I}_{m_i} + \mathbf{R}_i \bm{\Pi}_i \mathbf{R}_i^\top\big)^{-1}
\, \mathbf{R}_i \bm{\Pi}_i \mathbf{V}_i, \label{eq:Giy_star_svd}\\[3pt]
\mathbf{D}_{i,w}^\star
&=
\bm{\Phi}_{11}^{-1} \bm{\Sigma}_i' \mathbf{U}_i'^\top
\Big(\mathbf{V}_i - \mathbf{R}_i^\top \mathbf{D}_{i,y}^\star\Big). \label{eq:Giw_star_svd}
\end{align}
\end{proposition}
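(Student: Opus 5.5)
We treat the two claims separately: first the LMMSE form via the normal equations, then the efficient evaluation via a block (Schur-complement) solve.

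\textbf{Step 1: the closed form \eqref{eq:Gi_star_closed_form}.} In the whitened model \eqref{eq:Ai_def_restate} the MSE is a quadratic in $\mathbf D_i$:
\[
\mathcal E_i=\tr\big(\mathbf V_i^\top\mathbf V_i\big)-2\tr\big(\mathbf D_i^\top\mathbf A_i\mathbf V_i\big)+\tr\big(\mathbf D_i^\top(\mathbf A_i\mathbf A_i^\top+\mathbf I)\mathbf D_i\big),
\]
using $\mathbb E[\mathbf z_i\mathbf z_i^\top]=\mathbf A_i\mathbf A_i^\top+\mathbf I$ and $\mathbb E[\mathbf z_i\mathbf w_i^\top]=\mathbf A_i\mathbf V_i$.

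\begin{itemize}
\item The Hessian is $2\big((\mathbf A_i\mathbf A_i^\top+\mathbf I)\otimes\mathbf I_{d_i}\big)\succ\mathbf 0$, so the objective is strictly convex.
\item Setting the gradient to zero gives the normal equations $(\mathbf A_i\mathbf A_i^\top+\mathbf I)\mathbf D_i=\mathbf A_i\mathbf V_i$.
\item These have the unique solution \eqref{eq:Gi_star_closed_form}.
\end{itemize}

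By Lemma~\ref{lem:wlog_side_reduction}, the reduced $\mathbf A_i$ may be used without loss.

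\textbf{Step 2: split the normal equations into blocks.} Write the normal equations conformally with $\mathbf z_i=[\mathbf y_i^\top,(\mathbf w_i')^\top]^\top$. With $\mathbf V_i'=\mathbf U_i'\bm\Sigma_i'$ and ${\mathbf U_i'}^\top\mathbf U_i'=\mathbf I$, we have $(\mathbf V_i')^\top\mathbf V_i'=(\bm\Sigma_i')^2$. The system becomes
\begin{align*}
(\mathbf I+\mathbf R_i\mathbf R_i^\top)\mathbf D_{i,y}+\mathbf R_i\mathbf U_i'\bm\Sigma_i'\mathbf D_{i,w}&=\mathbf R_i\mathbf V_i,\\
\bm\Sigma_i'{\mathbf U_i'}^\top\mathbf R_i^\top\mathbf D_{i,y}+\bm\Phi_{11}\mathbf D_{i,w}&=\bm\Sigma_i'{\mathbf U_i'}^\top\mathbf V_i .
\end{align*}

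\begin{itemize}
\item The $(2,2)$ block $\bm\Phi_{11}=\mathbf I+(\bm\Sigma_i')^2$ is diagonal and trivially invertible.
\item Solving the second equation for $\mathbf D_{i,w}$ gives exactly \eqref{eq:Giw_star_svd}.
\end{itemize}

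\textbf{Step 3: eliminate $\mathbf D_{i,w}$ and identify $\bm\Pi_i$.} Substitute \eqref{eq:Giw_star_svd} into the first equation. The coefficient of $\mathbf D_{i,y}$ becomes the Schur complement
\[
\mathbf I+\mathbf R_i\big(\mathbf I-\mathbf U_i'\bm\Sigma_i'\bm\Phi_{11}^{-1}\bm\Sigma_i'{\mathbf U_i'}^\top\big)\mathbf R_i^\top .
\]
Since the matrices $\bm\Sigma_i'$ and $\bm\Phi_{11}$ are diagonal, they commute, so $\bm\Sigma_i'\bm\Phi_{11}^{-1}\bm\Sigma_i'=\bm\Lambda_i$. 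The middle factor is therefore $\bm\Pi_i$. The same manipulation turns the right-hand side into $\mathbf R_i\bm\Pi_i\mathbf V_i$, which yields \eqref{eq:Giy_star_svd}.

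\textbf{Step 4: well-posedness.} It remains to check that the inverses used are valid.

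\begin{itemize}
\item $\bm\Lambda_i$ has diagonal entries $\sigma^2/(1+\sigma^2)<1$.
\item Hence $\bm\Pi_i$ has eigenvalue $1$ on the complement of $\Span(\mathbf U_i')$ and eigenvalues $1/(1+\sigma^2)>0$ on $\Span(\mathbf U_i')$, so $\bm\Pi_i\succ\mathbf 0$.
\item Then $\mathbf I+\mathbf R_i\bm\Pi_i\mathbf R_i^\top\succ\mathbf 0$ is invertible, and only $m_i\times m_i$ and diagonal inverses appear.
\end{itemize}

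The only step requiring care is the algebra in Step 3. I would double-check it against the Woodbury identity $\bm\Pi_i=(\mathbf I+\mathbf V_i'{\mathbf V_i'}^\top)^{-1}$. This identity also gives the interpretation of $\bm\Pi_i$ as the posterior covariance of $\mathbf x$ given $\mathbf w_i'$.
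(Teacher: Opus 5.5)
Your proposal is correct and follows the paper's proof essentially step for step. You derive the normal equations $(\mathbf A_i\mathbf A_i^\top+\mathbf I)\mathbf D_i=\mathbf A_i\mathbf V_i$ via the gradient and strict convexity where the paper uses the equivalent orthogonality principle, then do the same conformal block split with $\bm\Phi_{11}=\mathbf I+(\bm\Sigma_i')^2$ and Schur-complement elimination of $\mathbf D_{i,w}$ using $\bm\Sigma_i'\bm\Phi_{11}^{-1}\bm\Sigma_i'=\bm\Lambda_i$ to expose $\bm\Pi_i$. Your Step~4 eigenvalue check that $\bm\Pi_i\succ\mathbf 0$ is a small, correct addition, which the paper asserts in the statement but does not verify.
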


\begin{addedtext}
The matrix \(\bm\Pi_i\) in \eqref{eq:system_matrices} acts as the posterior source covariance given the side information, i.e., \(\bm\Pi_i = \operatorname{Cov}(\mathbf x\mid \mathbf w_i')\). Consequently, the conditional covariances governing the channel observation are
\[
    \operatorname{Cov}(\mathbf y_i\mid \mathbf w_i') = \mathbf I_{m_i}+\mathbf R_i\bm\Pi_i\mathbf R_i^\top, \;
    \operatorname{Cov}(\mathbf y_i,\mathbf w_i\mid \mathbf w_i') = \mathbf R_i\bm\Pi_i\mathbf V_i.
\]
Thus, the channel-decoder block \(\mathbf D_{i,y}^\star\) in \eqref{eq:Giy_star_svd} is precisely the conditional LMMSE estimator of \(\mathbf w_i\) from \(\mathbf y_i\) given \(\mathbf w_i'\).
\end{addedtext}

\begin{remark}[\added{Residual covariance and LCBC geometry}]\label{rem:Pi_interp}
\begin{addedtext}
The posterior covariance \(\bm\Pi_i\) embeds the LCBC side-information geometry into the transceiver update. Its decomposition,
\[
    \bm\Pi_i = \mathbf P_{i,\perp} + \mathbf U_i' \big(\mathbf I_{d_i'} + (\bm\Sigma_i')^2\big)^{-1} {\mathbf U_i'}^\top,
    \; 
    \mathbf P_{i,\perp} \triangleq \mathbf I_n - \mathbf U_i'{\mathbf U_i'}^\top,
\]
reveals that \(\bm\Pi_i\) acts as a residual covariance rather than a rigid orthogonal projection. It preserves full uncertainty along the unobserved subspace \(\mathbf P_{i,\perp}\), while optimally attenuating each side-information direction according to its reliability. 

Consequently, the residual uncertainty of any demanded direction \(\mathbf v\), given by \(\mathbf v^\top\bm\Pi_i\mathbf v\), explicitly depends on its spatial alignment with the noisy side information. This structural weighting fundamentally distinguishes WLCBC from conventional MIMO designs that treat user demands as independent data streams.
\end{addedtext}
\end{remark}
\begin{addedtext}
\begin{remark}[Matrix-free LMMSE implementation and solve-domain selection]
\label{rem:schur_selection_short}
The quantities $\bm\Phi_{11}^{-1}$, $\bm\Lambda_i$, and $\mathbf S_i\triangleq{\mathbf U_i'}^\top\mathbf V_i$ depend only on the side-information model and can therefore be cached during preprocessing.

At each AO iteration, define $\mathbf G_i\triangleq\mathbf R_i\mathbf U_i'\in\mathbb R^{m_i\times d_i'}$. Then the coefficient matrix and right-hand side in \eqref{eq:Giy_star_svd} can be evaluated without explicitly forming the dense matrix $\bm{\Pi}_i$:
\begin{align}
\mathbf I_{m_i}+\mathbf R_i\bm{\Pi}_i\mathbf R_i^\top &= \mathbf I_{m_i}+\mathbf R_i\mathbf R_i^\top -\mathbf G_i\bm{\Lambda}_i\mathbf G_i^\top, \label{eq:matrix_free_coeff}\\
\mathbf R_i\bm{\Pi}_i\mathbf V_i &= \mathbf R_i\mathbf V_i -\mathbf G_i\bm{\Lambda}_i\mathbf S_i. \label{eq:matrix_free_rhs}
\end{align}
The resulting $m_i\times m_i$ coefficient matrix is symmetric positive definite (SPD) and can be factorized by Cholesky. After obtaining $\mathbf D_{i,y}^\star$, the side-information decoder block is recovered as $\mathbf D_{i,w}^\star = \bm\Phi_{11}^{-1}\bm\Sigma_i' (\mathbf S_i-\mathbf G_i^\top\mathbf D_{i,y}^\star)$, avoiding the explicit formation of $\mathbf V_i-\mathbf R_i^\top\mathbf D_{i,y}^\star$.

When $n<m_i$, the equivalent push-through identity
\[
(\mathbf I_{m_i}+\mathbf R_i\bm{\Pi}_i\mathbf R_i^\top)^{-1}\mathbf R_i\bm{\Pi}_i = \mathbf R_i(\bm{\Pi}_i^{-1}+\mathbf R_i^\top\mathbf R_i)^{-1}
\]
moves the main factorization to the source domain, where $\bm{\Pi}_i^{-1} = \mathbf I_n+\mathbf U_i'(\bm{\Sigma}_i')^2{\mathbf U_i'}^\top$. Hence, an implementation may strategically select the receive- or source-domain form according to the smaller effective system dimension. Detailed per-iteration operation counts are provided in Section~\ref{subsec:complexity}.
\end{remark}
\end{addedtext}
\subsection{Precoder update: Optimal solution of a convex QCQP for fixed decoders}\label{subsec:opt_precoder}
\begin{addedtext}
With the side-information geometry abstracted into the posterior covariances $\{\bm\Pi_i\}$, the precoder update transcends conventional stream-wise power allocation. Instead, it steers MIMO resources toward demand directions based jointly on their residual uncertainty, channel alignment, and decoder-induced weighting.
\end{addedtext}

Fix an arbitrary decoder set $\{\mathbf{D}_i\}_{i\in\mathcal{K}}$.
The corresponding precoder block of \eqref{eq:prob_joint} is
\begin{equation}\label{eq:precoder_subprob}
\min_{\mathbf{E}\in\R^{n\times r}} \ \sum_{i\in\mathcal{K}} \omega_i\,\mathcal{E}_i(\mathbf{E},\mathbf{D}_i)
\quad \text{s.t.}\quad \|\mathbf{E}\|_F^2\le r.
\end{equation}

Partition each fixed decoder conformably with
$\mathbf z_i=[\mathbf y_i^\top,(\mathbf w_i')^\top]^\top$ as
\[
\mathbf D_i=
\begin{bmatrix}
\mathbf D_{i,y}\\
\mathbf D_{i,w}
\end{bmatrix},
\qquad
\mathbf D_{i,y}\in\mathbb R^{m_i\times d_i},\quad
\mathbf D_{i,w}\in\mathbb R^{d_i'\times d_i}.
\]
Define (with $\mathbf V_i'$ denoting the reduced matrix after Remark~\ref{rem:dim_reset})
\begin{equation}\label{eq:Ci_Mi_def}
\mathbf{C}_i \triangleq \mathbf{V}_i^\top-\mathbf{D}_{i,w}^\top\mathbf{V}_i'^\top \in \R^{d_i\times n},
\;
\mathbf{B}_i \triangleq \mathbf{D}_{i,y}^\top\mathbf{H}_i \in \R^{d_i\times r}.
\end{equation}

The following lemma gives the quadratic reduction of the precoder subproblem; its proof is deferred to Section~\ref{sec:derivations}.

\begin{lemma}[Quadratic form of the precoder subproblem]\label{lem:mse_quadratic_E}
Fix $\{\mathbf{D}_i\}_{i\in\mathcal K}$ and define $\mathbf C_i,\mathbf B_i$ by \eqref{eq:Ci_Mi_def}.
Then the per-user MSE admits the decomposition
\begin{equation}\label{eq:mse_E_decomp}
    \mathcal{E}_i(\mathbf{E},\mathbf{D}_i)
    = \|\mathbf{C}_i-\mathbf{B}_i\mathbf{E}^\top\|_F^2
    + \|\mathbf{D}_{i,y}\|_F^2+\|\mathbf{D}_{i,w}\|_F^2.
\end{equation}
Consequently, after dropping the terms independent of $\mathbf E$, the precoder subproblem \eqref{eq:precoder_subprob} is equivalent to
\begin{equation}\label{eq:precoder_qcqp}
    \min_{\|\mathbf{E}\|_F^2\le r}\ \sum_{i\in\mathcal{K}}\omega_i\,\|\mathbf{C}_i-\mathbf{B}_i\mathbf{E}^\top\|_F^2.
\end{equation}
\end{lemma}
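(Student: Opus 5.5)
The plan is to start from the exact decomposition \eqref{eq:mse_lcbc_decomp} and rewrite it in the canonical normalization of Remark~\ref{rem:canonical}, with the reduced side information of Remark~\ref{rem:dim_reset}. After whitening, the composite noise $\hat{\mathbf n}_i$ in \eqref{eq:Ai_def_restate} has identity covariance, so both noise variances equal one. In particular, $\widetilde{\mathbf n}_i'={\mathbf T_i'}^\top\mathbf n_i'$ remains white, as noted before Lemma~\ref{lem:wlog_side_reduction}. As a self-contained check, I would also rederive \eqref{eq:mse_lcbc_decomp} directly. Substituting $\mathbf z_i=\mathbf A_i\mathbf x+\hat{\mathbf n}_i$ gives
\[
\mathbf w_i-\mathbf D_i^\top\mathbf z_i=(\mathbf V_i^\top-\mathbf D_{i,y}^\top\mathbf H_i\mathbf E^\top-\mathbf D_{i,w}^\top\mathbf V_i'^\top)\mathbf x-\mathbf D_{i,y}^\top\mathbf n_i-\mathbf D_{i,w}^\top\mathbf n_i'.
\]
The three terms are mutually independent and zero-mean, so the expected squared norm splits into three traces. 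Using $\mathbb E[\mathbf x\mathbf x^\top]=\mathbf I_n$ and identity noise covariances, each trace becomes a squared Frobenius norm.

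Next I would identify the signal term with the definitions \eqref{eq:Ci_Mi_def}. The coefficient of $\mathbf x$ is $(\mathbf V_i^\top-\mathbf D_{i,w}^\top\mathbf V_i'^\top)-(\mathbf D_{i,y}^\top\mathbf H_i)\mathbf E^\top=\mathbf C_i-\mathbf B_i\mathbf E^\top$. Its squared Frobenius norm equals the transpose form in \eqref{eq:mse_lcbc_decomp}, since $\|\mathbf X\|_F=\|\mathbf X^\top\|_F$. This gives \eqref{eq:mse_E_decomp}.

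For the consequence, I would observe that $\mathbf C_i$, $\mathbf B_i$, $\|\mathbf D_{i,y}\|_F^2$ and $\|\mathbf D_{i,w}\|_F^2$ depend only on the fixed decoders and the known model. Hence $\sum_i\omega_i(\|\mathbf D_{i,y}\|_F^2+\|\mathbf D_{i,w}\|_F^2)$ is an additive constant in $\mathbf E$. Removing it changes neither the minimizers nor the feasible set of \eqref{eq:precoder_subprob}, so that problem is equivalent to \eqref{eq:precoder_qcqp}.

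There is no real obstacle here. The only point needing care is bookkeeping: the noise coefficients must be unity, which requires invoking the canonical normalization together with the whiteness-preserving reduction, and the transposes must be handled correctly.
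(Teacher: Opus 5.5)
Your proposal is correct and follows essentially the same route as the paper's proof. You substitute the signal model into the error, use mutual independence and identity covariances to eliminate cross terms, identify the $\mathbf x$-coefficient as $\mathbf C_i-\mathbf B_i\mathbf E^\top$, and drop the $\mathbf E$-independent decoder-norm terms. Your explicit note that the unit noise coefficients rely on the canonical normalization and the whiteness-preserving side-information reduction is a bookkeeping point the paper leaves implicit.
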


Let $d_{\mathrm{tot}} \triangleq \sum_{i\in\mathcal{K}} d_i$ and define the weighted concatenations
\begin{equation}\label{eq:BC_concat}
\begin{aligned}
  \mathbf{C} &\triangleq \big[\sqrt{\omega_1}\mathbf{C}_1^\top,\dots,\sqrt{\omega_K}\mathbf{C}_K^\top\big] \in \mathbb{R}^{n\times d_{\mathrm{tot}}}, \\
  \mathbf{B} &\triangleq \big[\sqrt{\omega_1}\mathbf{B}_1^\top,\dots,\sqrt{\omega_K}\mathbf{B}_K^\top\big] \in \mathbb{R}^{r\times d_{\mathrm{tot}}}.
\end{aligned}
\end{equation}
Then \eqref{eq:precoder_qcqp} can be written compactly as the convex QCQP
\begin{equation}\label{eq:precoder_ls}
\min_{\|\mathbf{E}\|_F^2\le r}\ \|\mathbf{C}-\mathbf{E}\mathbf{B}\|_F^2.
\end{equation}

\begin{proposition}[Optimal precoder structure]\label{prop:opt_precoder}
Given the fixed decoders $\{\mathbf{D}_i\}$, let $\mathbf Q\triangleq \mathbf B\mathbf B^\top\in\R^{r\times r}$
and $\mathbf P\triangleq \mathbf C\mathbf B^\top\in\R^{n\times r}$.
An optimal solution to the convex QCQP \eqref{eq:precoder_ls} is
\begin{equation}\label{eq:E_star_mu}
\mathbf{E}^\star(\mu)=\mathbf{P}\big(\mathbf{Q}+\mu\mathbf{I}_r\big)^{-1},
\end{equation}
where $\mu\ge 0$ satisfies complementary slackness: either $\mu=0$ if $\|\mathbf{E}^\star(0)\|_F^2 \le r$,
or $\mu>0$ such that $\|\mathbf{E}^\star(\mu)\|_F^2=r$.
(If $\mathbf Q$ is singular and $\mu=0$, we take $\mathbf E^\star(0)=\mathbf P\mathbf Q^\dagger$ as a minimum-norm minimizer.)
\end{proposition}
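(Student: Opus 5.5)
We treat \eqref{eq:precoder_ls} as a convex problem and obtain the stated structure from its KKT conditions. The objective $f(\mathbf E)=\|\mathbf C-\mathbf E\mathbf B\|_F^2=\tr(\mathbf C^\top\mathbf C)-2\tr(\mathbf E^\top\mathbf P)+\tr(\mathbf E\mathbf Q\mathbf E^\top)$ is a convex quadratic in $\mathbf E$, since $\mathbf Q=\mathbf B\mathbf B^\top\succeq\mathbf 0$. The feasible set $\{\|\mathbf E\|_F^2\le r\}$ is a convex ball with nonempty interior (e.g.\ $\mathbf E=\mathbf 0$). Slater's condition therefore holds, so the KKT conditions are necessary and sufficient for global optimality. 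The Lagrangian is $f(\mathbf E)+\mu(\|\mathbf E\|_F^2-r)$. Setting its gradient to zero gives the stationarity condition
\[
\mathbf E(\mathbf Q+\mu\mathbf I_r)=\mathbf P,
\]
together with $\mu\ge0$, primal feasibility, and $\mu(\|\mathbf E\|_F^2-r)=0$.

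For $\mu>0$, the matrix $\mathbf Q+\mu\mathbf I_r\succ\mathbf 0$ is invertible, so stationarity forces $\mathbf E=\mathbf E^\star(\mu)$ as in \eqref{eq:E_star_mu}. We split into the two complementary-slackness cases.

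\emph{Case $\mu=0$.} If some unconstrained minimizer is feasible, it is optimal with $\mu=0$. Every unconstrained minimizer solves $\mathbf E\mathbf Q=\mathbf P$, and this system is consistent because $\mathbf P=\mathbf C\mathbf B^\top$ has rows in the range of $\mathbf Q$ (as $\operatorname{range}(\mathbf B)=\operatorname{range}(\mathbf B\mathbf B^\top)$). Hence $\mathbf P\mathbf Q^\dagger$ is a minimizer, and it is the minimum-norm one, since its rows lie in $\operatorname{range}(\mathbf Q)$ and any other solution differs by rows in $\ker\mathbf Q$. So if $\|\mathbf P\mathbf Q^\dagger\|_F^2\le r$, it is feasible and optimal, which justifies the parenthetical convention for singular $\mathbf Q$.

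\emph{Case $\mu>0$.} Otherwise every unconstrained minimizer is infeasible, so the constraint must be active and $\mu>0$. We need existence of $\mu>0$ with $\|\mathbf E^\star(\mu)\|_F^2=r$. Take the eigendecomposition $\mathbf Q=\mathbf W\operatorname{diag}(q_j)\mathbf W^\top$ and let $\mathbf p_j=\mathbf P\mathbf w_j$. Then
\[
g(\mu)\triangleq\|\mathbf E^\star(\mu)\|_F^2=\sum_j\frac{\|\mathbf p_j\|^2}{(q_j+\mu)^2}.
\]
Components with $q_j=0$ have $\mathbf p_j=\mathbf 0$ by the range argument above, so those terms vanish. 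The function $g$ is therefore continuous and nonincreasing on $(0,\infty)$, strictly decreasing unless $\mathbf P=\mathbf 0$, and tends to $0$ as $\mu\to\infty$. As $\mu\downarrow0$, it tends to $\sum_{q_j>0}\|\mathbf p_j\|^2/q_j^2=\|\mathbf P\mathbf Q^\dagger\|_F^2>r$. By the intermediate value theorem and monotonicity, a unique root $\mu>0$ of $g(\mu)=r$ exists. The resulting $\mathbf E^\star(\mu)$ satisfies all KKT conditions and is thus globally optimal.

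The main obstacle is the singular-$\mathbf Q$ bookkeeping. Showing that $\mathbf P$ annihilates $\ker\mathbf Q$, i.e.\ $\mathbf P\mathbf w=\mathbf C\mathbf B^\top\mathbf w=\mathbf 0$ whenever $\mathbf B\mathbf B^\top\mathbf w=\mathbf 0$ (since then $\mathbf B^\top\mathbf w=\mathbf 0$), is the key fact. It ensures both consistency at $\mu=0$ and the correct limit of $g$ as $\mu\downarrow0$, so that the two cases of complementary slackness are exhaustive.
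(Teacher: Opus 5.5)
Your proof is correct and takes essentially the same KKT route as the paper: stationarity of the Lagrangian gives $\mathbf E(\mathbf Q+\mu\mathbf I_r)=\mathbf P$, and $\mu$ is then fixed by complementary slackness together with the fact that $\|\mathbf E^\star(\mu)\|_F^2$ is strictly decreasing in $\mu$ and tends to $0$. Your additional points, the explicit Slater argument and the observation that $\mathbf P$ annihilates $\ker\mathbf Q$, fill in steps the paper leaves implicit: consistency of $\mathbf E\mathbf Q=\mathbf P$, minimality of $\mathbf P\mathbf Q^\dagger$, and the right-limit $\lim_{\mu\downarrow 0}\|\mathbf E^\star(\mu)\|_F^2=\|\mathbf P\mathbf Q^\dagger\|_F^2$, which the intermediate-value step needs when $\mathbf Q$ is singular.
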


\begin{remark}[Implementation: safeguarded Newton for $\mu^\star$]\label{rem:implementation}
The power constraint \eqref{eq:power_constraint} is enforced via $\mu^\star$,
which can be computed efficiently as follows.
We avoid forming $(\mathbf Q+\mu\mathbf I_r)^{-1}$ explicitly.
If the unconstrained minimizer $\mathbf E^\star(0)=\mathbf P\mathbf Q^\dagger$ satisfies $\|\mathbf E^\star(0)\|_F^2\le r$,
then the constraint is inactive and $\mu^\star=0$. Otherwise, $\mu^\star>0$ is the unique root of $\phi(\mu)=r$, where
\[
\phi(\mu)\triangleq \|\mathbf E^\star(\mu)\|_F^2,\qquad
\mathbf E^\star(\mu)=\mathbf P(\mathbf Q+\mu\mathbf I_r)^{-1}.
\]
Since $\phi(\mu)$ is continuous and strictly decreasing on $\mu>0$ (cf.\ Proposition~\ref{prop:opt_precoder} and its proof),
we solve $\phi(\mu)=r$ via a safeguarded Newton (Newton--bisection) method.

A numerically stable way to evaluate $\phi(\mu)$ and $\phi'(\mu)$ is to diagonalize the \emph{positive semi-definite} (PSD) matrix $\mathbf{Q}$ once per AO iteration:
\[
\mathbf Q=\mathbf U\,\mathrm{diag}(\lambda_1,\dots,\lambda_r)\mathbf U^\top,\qquad \lambda_j\ge 0,
\]
and define $\widetilde{\mathbf P}\triangleq \mathbf P\mathbf U=[\tilde{\mathbf p}_1,\dots,\tilde{\mathbf p}_r]$.
Then
\[
\phi(\mu)=\sum_{j=1}^r \frac{\|\tilde{\mathbf p}_j\|_2^2}{(\lambda_j+\mu)^2},
\qquad
\phi'(\mu)= -2\sum_{j=1}^r \frac{\|\tilde{\mathbf p}_j\|_2^2}{(\lambda_j+\mu)^3}.
\]
Newton's update is
\[
\mu_{\rm N}\leftarrow \mu-\frac{\phi(\mu)-r}{\phi'(\mu)}
=
\mu + \frac{\phi(\mu)-r}{2\sum_{j=1}^r \frac{\|\tilde{\mathbf p}_j\|_2^2}{(\lambda_j+\mu)^3}}.
\]

To safeguard the iterations, maintain a bracket $[\mu_{\rm lo},\mu_{\rm hi}]$ such that
$\phi(\mu_{\rm lo})\ge r\ge \phi(\mu_{\rm hi})$.
A universal one-shot choice is $\mu_{\rm lo}=0$ and
\[
\mu_{\rm hi}=\frac{\|\mathbf P\|_F}{\sqrt r},
\]
which guarantees $\phi(\mu_{\rm hi})\le r$ (see the proof of Proposition~\ref{prop:opt_precoder}).
Starting from $\mu^{(0)}=(\mu_{\rm lo}+\mu_{\rm hi})/2$, iterate:
(i) compute $\mu_{\rm N}$ by Newton's update;
(ii) if $\mu_{\rm N}\notin(\mu_{\rm lo},\mu_{\rm hi})$ or is not finite, set $\mu_{\rm N}\leftarrow(\mu_{\rm lo}+\mu_{\rm hi})/2$;
(iii) update the bracket: if $\phi(\mu_{\rm N})>r$ set $\mu_{\rm lo}\leftarrow\mu_{\rm N}$, else set $\mu_{\rm hi}\leftarrow\mu_{\rm N}$.
Terminate when $|\phi(\mu)-r|/r\le \varepsilon_\mu$ for a prescribed tolerance $\varepsilon_\mu$.

Finally, compute $\mathbf E^\star(\mu^\star)$ by solving the SPD system
\[
(\mathbf Q+\mu^\star\mathbf I_r)\mathbf X=\mathbf P^\top,
\qquad
\mathbf E^\star(\mu^\star)=\mathbf X^\top,
\]
or equivalently via the diagonal formula
$
\mathbf E^\star(\mu^\star)=\widetilde{\mathbf P}\,\mathrm{diag}\big(\frac{1}{\lambda_1+\mu^\star},\dots,\frac{1}{\lambda_r+\mu^\star}\big)\mathbf U^\top.
$
\end{remark}
\begin{algorithm}[!htbp]
\caption{WLCBC-Aware Alternating Transceiver Design}
\label{alg:wsmse_solver_compact}
\begin{algorithmic}[1]
\REQUIRE
$\{\mathbf H_i,\mathbf V_i,\mathbf V_i',
\sigma_{c,i}^2,\sigma_{s,i}^2\}_{i\in\mathcal K}$,
weights $\{\omega_i\}_{i\in\mathcal K}$,
power budget, tolerance $\epsilon$, and maximum iteration count $t_{\max}$.

\ENSURE $\mathbf E^\star$, $\{\mathbf D_i^\star\}_{i\in\mathcal K}$.

\STATE \textbf{Preprocessing:} For each $i$, compute thin SVD $\mathbf V_i'=\mathbf U_i'\bm\Sigma_i'{\mathbf T_i'}^\top$ and store $(\mathbf U_i',\bm\Sigma_i')$.
\STATE Initialize a feasible $\mathbf E^{(0)}$ with $\|\mathbf E^{(0)}\|_F^2\le r$; set $t\leftarrow 0$.
\FOR{$i=1$ to $K$}
    \STATE Compute $\mathbf D_{i,y}^{(0)}$ and $\mathbf D_{i,w}^{(0)}$ via \eqref{eq:Giy_star_svd}--\eqref{eq:Giw_star_svd} using $\mathbf E^{(0)}$.
\ENDFOR
\STATE Evaluate $F^{(0)} \leftarrow F(\mathbf E^{(0)},\{\mathbf D_i^{(0)}\}_{i\in\mathcal K})$.

\REPEAT
\STATE $t\leftarrow t+1$.

\STATE \textbf{Decoder update (parallel over $i$):}
\FOR{$i=1$ to $K$}
    \STATE $\mathbf R_i \leftarrow \mathbf H_i(\mathbf E^{(t-1)})^\top$.
    \STATE Form $\bm{\Phi}_{11}$ and $\bm{\Lambda}_i$ via \eqref{eq:system_matrices}, and update $\mathbf D_{i,y}^{(t)}$, $\mathbf D_{i,w}^{(t)}$ via \eqref{eq:Giy_star_svd}--\eqref{eq:Giw_star_svd} using the matrix-free implementation in Remark~\ref{rem:schur_selection_short}.
\ENDFOR

\STATE \textbf{Precoder update:}
\STATE Form $\mathbf C,\mathbf B$ via \eqref{eq:BC_concat}; set $\mathbf P\leftarrow \mathbf C\mathbf B^\top$, $\mathbf Q\leftarrow \mathbf B\mathbf B^\top$.
\STATE Compute unconstrained solution $\mathbf E_0 \leftarrow \mathbf P\mathbf Q^\dagger$.
\IF{$\|\mathbf E_0\|_F^2 \le r$}
    \STATE $\mu^\star \leftarrow 0$; $\mathbf E^{(t)} \leftarrow \mathbf E_0$.
\ELSE
    \STATE Find $\mu^\star>0$ such that $\|\mathbf P(\mathbf Q+\mu^\star\mathbf I_r)^{-1}\|_F^2=r$ via the safeguarded Newton method in Remark~\ref{rem:implementation}.
    \STATE Solve $(\mathbf Q+\mu^\star\mathbf I_r)\mathbf X=\mathbf P^\top$; set $\mathbf E^{(t)} \leftarrow \mathbf X^\top$.
\ENDIF

\STATE Evaluate $F^{(t)} \leftarrow F(\mathbf E^{(t)},\{\mathbf D_i^{(t)}\}_{i\in\mathcal K})$.
\UNTIL{the stopping criterion \eqref{eq:stopping} holds or $t=t_{\max}$.}

\STATE \textbf{Final synchronization:}
\STATE Set $\mathbf E^\star\leftarrow \mathbf E^{(t)}$ and recompute all $\{\mathbf D_i^\star\}_{i\in\mathcal K}$ via \eqref{eq:Giy_star_svd}--\eqref{eq:Giw_star_svd} using $\mathbf E^\star$.
\STATE Construct the decoder in the original observation coordinates and
provide $\mathbf D_{i,\mathrm{dep}}^\star$ to user $i$.
\end{algorithmic}
\end{algorithm}
\subsection{Overall AO algorithm and stopping rule}\label{subsec:ao_algorithm}
Define $F^{(t)}\triangleq F(\mathbf E^{(t)},\{\mathbf D_i^{(t)}\})$.
We terminate the AO loop when the relative improvement satisfies
\begin{equation}\label{eq:stopping}
\frac{\big|F^{(t)}-F^{(t-1)}\big|}{F^{(t-1)}} \le \epsilon,
\end{equation}
or when a prescribed maximum iteration count is reached.

\begin{addedtext}

\subsection{Per-Iteration Complexity and Scalability}
\label{subsec:complexity}

We count dense arithmetic operations and exclude the one-time
side-information SVD preprocessing. Iteration-invariant quantities
$(\bm\Phi_{11}^{-1},\bm\Lambda_i,\mathbf S_i)$ are assumed to be cached.
Here, $d_i'$ denotes the reduced informative rank after
Remark~\ref{rem:dim_reset}, and
$d_{\mathrm{tot}}\triangleq\sum_{i=1}^K d_i$.

For user $i$, forming
$\mathbf R_i=\mathbf H_i\mathbf E^\top$ costs $\mathcal O(m_iMn)$
operations due to the block-diagonal structure of $\mathbf H_i$. Using
the matrix-free receive-domain implementation in
Remark~\ref{rem:schur_selection_short}, the decoder update requires
\begin{equation}
\begin{aligned}
\mathcal C_{\mathrm{dec},i}
=
\mathcal O\Big(&
m_iMn+m_i^2n+m_i^3
+m_i(n+m_i)d_i \\
&+
m_i(n+m_i+d_i)d_i'
\Big).
\end{aligned}
\label{eq:decoder_complexity}
\end{equation}
Thus, the additional per-iteration dependence on the reduced
side-information rank $d_i'$ is strictly linear. If the source-domain
form in Remark~\ref{rem:schur_selection_short} is used, the
factorization-related terms $\mathcal O(m_i^2n+m_i^3)$ are replaced by
$\mathcal O(m_in^2+n^3)$. The $K$ independent decoder updates can be
executed in parallel.

The precoder update consists of forming
$\{\mathbf C_i,\mathbf B_i\}_{i=1}^K$, accumulating $\mathbf P$ and
$\mathbf Q$, performing one spectral factorization of $\mathbf Q$, and
running an $N_\mu$-step safeguarded scalar dual search. Aggregating these
steps gives
\begin{equation}
\begin{aligned}
\mathcal C_{\mathrm{prec}}
=
\mathcal O\bigg(
&\sum_{i=1}^K(nd_id_i'+m_iMd_i)
+nrd_{\mathrm{tot}}+r^2d_{\mathrm{tot}} \\
&+r^3+nr^2+N_\mu r
\bigg).
\end{aligned}
\label{eq:precoder_complexity}
\end{equation}
Here, the terms $r^3+nr^2$ correspond to the eigendecomposition of
$\mathbf Q$ and the multiplication
$\widetilde{\mathbf P}=\mathbf P\mathbf U$, while $N_\mu r$ accounts for
the scalar Newton--bisection evaluations after the spectral quantities
are cached.

Consequently, one full AO iteration has serial work and parallel
decoder-update depth
\begin{align}
\mathcal C_{\mathrm{AO}}^{\mathrm{ser}}
&=
\sum_{i=1}^K \mathcal C_{\mathrm{dec},i}
+\mathcal C_{\mathrm{prec}},
\label{eq:overall_AO_complexity}\\
\mathcal C_{\mathrm{AO}}^{\mathrm{par}}
&=
\max_{i\in\mathcal K}\mathcal C_{\mathrm{dec},i}
+\mathcal C_{\mathrm{prec}}.
\label{eq:parallel_AO_complexity}
\end{align}

For homogeneous antenna and demand dimensions
($m_i=TN$, $r=TM$, $d_i=d$, and $d_i'=d'$), the dominant cubic dense
factorization burden under the receive-domain implementation is
$
\mathcal O(KT^3N^3+T^3M^3).
$
The first term arises from the $K$ receive-domain SPD factorizations of
size $TN$, and the second from the eigendecomposition of the
$TM\times TM$ matrix $\mathbf Q$. Since the scalar dual search costs only
$\mathcal O(N_\mu TM)$ once the spectral quantities are cached, it is not
the dominant operation. Other matrix-multiplication terms retained in
\eqref{eq:decoder_complexity} and \eqref{eq:precoder_complexity} may
dominate in regimes where $n$ or $d_{\mathrm{tot}}$ scales faster than
the block antenna dimensions.
\end{addedtext}

\begin{proposition}[Convergence and stationarity of AO]\label{prop:convergence}
Let $\{(\mathbf E^{(t)},\{\mathbf D_i^{(t)}\}_{i\in\mathcal K})\}_{t\ge 0}$ be the sequence generated by
Algorithm~\ref{alg:wsmse_solver_compact}, and define
\[
F^{(t)} \triangleq F(\mathbf E^{(t)},\{\mathbf D_i^{(t)}\}_{i\in\mathcal K}).
\]
Then the objective sequence $\{F^{(t)}\}_{t\ge 0}$ is monotonically non-increasing and convergent.
Moreover, the iterate sequence is bounded, and every accumulation point of
$\{(\mathbf E^{(t)},\{\mathbf D_i^{(t)}\}_{i\in\mathcal K})\}_{t\ge 0}$
is a coordinatewise minimum of \eqref{eq:prob_joint}.
Consequently, every accumulation point satisfies the KKT conditions of \eqref{eq:prob_joint}.
\end{proposition}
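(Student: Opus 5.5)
The argument is the standard one for exact two-block coordinate descent, and it uses the per-block optimality results already established.

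\emph{Monotonicity.} Consider iteration $t$. The decoder step replaces $\{\mathbf D_i^{(t-1)}\}$ by the unique LMMSE minimizers of Proposition~\ref{prop:lmmse_decoder} for fixed $\mathbf E^{(t-1)}$. Hence
\[
F(\mathbf E^{(t-1)},\{\mathbf D_i^{(t)}\})\le F^{(t-1)}.
\]
The precoder step returns a global minimizer of the convex QCQP \eqref{eq:precoder_ls} (Lemma~\ref{lem:mse_quadratic_E} and Proposition~\ref{prop:opt_precoder}). This minimizer is feasible, and $\mathbf E^{(t-1)}$ is also feasible, so $F^{(t)}\le F(\mathbf E^{(t-1)},\{\mathbf D_i^{(t)}\})$. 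Since $F\ge 0$, the sequence $\{F^{(t)}\}$ is non-increasing and bounded below, and therefore convergent.

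\emph{Boundedness.} Every $\mathbf E^{(t)}$ satisfies $\|\mathbf E^{(t)}\|_F^2\le r$. Each $\mathbf D_i^{(t)}$ is a continuous function of $\mathbf E^{(t-1)}$, because the inverse $(\mathbf A_i\mathbf A_i^\top+\mathbf I)^{-1}$ exists for every $\mathbf E$. A continuous function maps the compact ball to a bounded set. A more direct bound comes from \eqref{eq:mse_lcbc_decomp}: it gives $\|\mathbf D_{i,y}\|_F^2+\|\mathbf D_{i,w}\|_F^2\le \mathcal E_i\le \mathcal E_i(\mathbf E,\mathbf 0)=\|\mathbf V_i\|_F^2$, using unit noise variances after normalization. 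Either way, the iterates are bounded, so accumulation points exist.

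\emph{Coordinatewise minimality.} Take a subsequence $t_k$ along which $(\mathbf E^{(t_k)},\{\mathbf D_i^{(t_k)}\})\to(\bar{\mathbf E},\{\bar{\mathbf D}_i\})$, and let $F^\infty$ be the limit of $F^{(t)}$.

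For the decoder block, $\mathbf D_i^{(t_k)}=\mathbf D_i^\star(\mathbf E^{(t_k-1)})$. This indexing mismatch is the main obstacle. I would handle it by passing to a further subsequence along which $\mathbf E^{(t_k-1)}\to\tilde{\mathbf E}$ as well. Continuity then gives $\bar{\mathbf D}_i=\mathbf D_i^\star(\tilde{\mathbf E})$, and
\[
F(\tilde{\mathbf E},\{\bar{\mathbf D}_i\})=\lim_k F(\mathbf E^{(t_k-1)},\{\mathbf D_i^{(t_k)}\})=F^\infty,
\]
by the sandwich in the monotonicity step. Now $F(\bar{\mathbf E},\{\bar{\mathbf D}_i\})=F^\infty$ also holds. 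Moreover, $\bar{\mathbf E}$ is a limit of precoder-block minimizers for decoders $\mathbf D^{(t_k)}\to\bar{\mathbf D}$. Passing to the limit in $F(\mathbf E^{(t_k)},\mathbf D^{(t_k)})\le F(\mathbf E,\mathbf D^{(t_k)})$ for every feasible $\mathbf E$ (the objective is continuous and the feasible set is fixed and closed) shows that $\bar{\mathbf E}$ minimizes $F(\cdot,\bar{\mathbf D})$. That inequality also gives $F(\bar{\mathbf E},\bar{\mathbf D})\le F(\tilde{\mathbf E},\bar{\mathbf D})$, with equality since both equal $F^\infty$. So $\tilde{\mathbf E}$ is a precoder minimizer too.

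To get decoder optimality at $\bar{\mathbf E}$, I would use strict convexity in $\mathbf D$. The decoder subproblem's minimal value $g(\mathbf E)=\min_{\mathbf D}F(\mathbf E,\mathbf D)$ satisfies
\[
g(\bar{\mathbf E})\le F(\bar{\mathbf E},\bar{\mathbf D})=F^\infty.
\]
Also $g(\mathbf E^{(t_k)})=F(\mathbf E^{(t_k)},\mathbf D^{(t_k+1)})$, which is sandwiched between $F^{(t_k+1)}$ and $F^{(t_k)}$ and therefore tends to $F^\infty$. Continuity of $g$ gives $g(\bar{\mathbf E})=F^\infty=F(\bar{\mathbf E},\bar{\mathbf D})$. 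Uniqueness of the LMMSE minimizer then yields $\bar{\mathbf D}_i=\mathbf D_i^\star(\bar{\mathbf E})$. Thus the accumulation point is a coordinatewise minimum.

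\emph{KKT conditions.} The KKT conditions of \eqref{eq:prob_joint} are:
\begin{itemize}
\item $\nabla_{\mathbf D}F=0$, which holds by decoder optimality;
\item $\nabla_{\mathbf E}F+2\mu\mathbf E=0$ with $\mu\ge0$ and complementary slackness.
\end{itemize}
The second condition is exactly the KKT system of the convex block problem at $\bar{\mathbf D}$. That system holds at $\bar{\mathbf E}$ because the block problem satisfies Slater's condition ($\mathbf E=\mathbf 0$ is strictly feasible) and $\bar{\mathbf E}$ is a block minimizer. Since the power constraint involves only $\mathbf E$, the joint Lagrangian stationarity splits into these two block conditions.
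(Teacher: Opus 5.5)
Your proof is correct. Monotonicity, boundedness and the KKT step match the paper's. You also add two things the paper leaves implicit: the explicit bound $\|\mathbf D_{i,y}\|_F^2+\|\mathbf D_{i,w}\|_F^2\le\|\mathbf V_i\|_F^2$ and a Slater justification for the precoder block.

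The genuine difference is the coordinatewise-minimality step. The paper does not prove it. It cites the convergence theory for exact two-block coordinate descent (Grippo--Sciandrone, Tseng). It argues applicability from two facts: the precoder minimizer is unique when the power constraint is active, and the min-norm choice $\mathbf P\mathbf Q^\dagger$ is single-valued otherwise.

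You instead give a self-contained subsequence argument. It uses only:
\begin{itemize}
\item compactness of the power ball;
\item continuity of $\mathbf E\mapsto\mathbf D^\star(\mathbf E)$ and of the value function $g$;
\item uniqueness of the decoder minimizer;
\item the sandwich $F^{(t+1)}\le F(\mathbf E^{(t)},\{\mathbf D_i^{(t+1)}\})\le F^{(t)}$.
\end{itemize}
This shows exactly which structure is used, and it does not depend on which precoder minimizer the algorithm picks. Your limit passage in $F(\mathbf E^{(t_k)},\{\mathbf D_i^{(t_k)}\})\le F(\mathbf E,\{\mathbf D_i^{(t_k)}\})$ is precisely closedness of the argmin correspondence over a fixed compact set. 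That is the property actually required. The single-valued $\mathbf P\mathbf Q^\dagger$ selection the paper invokes is discontinuous where $\rank(\mathbf B)$ changes, so it is not what delivers closedness. The paper's route buys brevity by delegating to known theory.

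One simplification: the detour through $\tilde{\mathbf E}=\lim_k\mathbf E^{(t_k-1)}$ is unnecessary. Your $g$-argument looks forward to $\mathbf D_i^{(t_k+1)}=\mathbf D_i^\star(\mathbf E^{(t_k)})$ rather than back to $\mathbf E^{(t_k-1)}$. That already resolves the index mismatch and yields $\bar{\mathbf D}_i=\mathbf D_i^\star(\bar{\mathbf E})$ on its own.
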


\section{Proofs}\label{sec:derivations}
\subsection{Proof of Lemma~\ref{lem:wlog_side_reduction}}
\begin{proof}
Let $\mathbf T_\perp$ be any orthonormal basis of the orthogonal complement of $\Span(\mathbf T)$, so that
$[\mathbf T,\mathbf T_\perp]$ is square orthonormal.
Premultiplying $\mathbf w'=\mathbf V'^\top \mathbf x+\mathbf n'$ by $[\mathbf T^\top,\mathbf T_\perp^\top]^\top$ yields
\[
\begin{bmatrix}\widetilde{\mathbf w}'\\ \mathbf w_\perp'\end{bmatrix}
=
\begin{bmatrix}\bm\Sigma \mathbf U^\top\mathbf x\\ \mathbf 0\end{bmatrix}
+
\begin{bmatrix}\widetilde{\mathbf n}'\\ \mathbf n_\perp'\end{bmatrix},
\]
where $\mathbf w_\perp'=\mathbf T_\perp^\top\mathbf n'$ is independent of $\mathbf x$ and contains no information about $\mathbf x$.
Hence, any linear estimator can set its coefficient on $\mathbf w_\perp'$ to zero without affecting MSE, and the optimal value is unchanged.
\end{proof}

\subsection{Proof of Proposition~\ref{prop:lmmse_decoder}}\label{subsec:proof_lmmse}
\begin{proof}
Fix user $i$ and omit the user index when clear. Under the reduced model, we have
\begin{equation}
    \mathbf{w}=\mathbf{V}^\top \mathbf{x},\quad
    \mathbf{z}=\mathbf{A}\mathbf{x}+\hat{\mathbf{n}},
\end{equation}
where $\mathbf{x}\sim\mathcal{N}(\mathbf{0},\mathbf{I}_n)$, $\hat{\mathbf{n}}\sim\mathcal{N}(\mathbf{0},\mathbf{I}_{m+d'})$, and $\mathbf{x}\perp \hat{\mathbf{n}}$.

The optimal linear MMSE estimator satisfies the orthogonality principle $\mathbb{E}[(\mathbf{w}-\mathbf{D}^\top\mathbf{z})\mathbf{z}^\top]=\mathbf{0}$, i.e.,
\begin{equation}\label{eq:orthogonality}
    \mathbb{E}[\mathbf{w}\mathbf{z}^\top]=\mathbf{D}^\top \mathbb{E}[\mathbf{z}\mathbf{z}^\top].
\end{equation}
We compute the cross-covariance and covariance matrices as follows:
\begin{equation}
\begin{split}
    \mathbb{E}[\mathbf{w}\mathbf{z}^\top]
    &= \mathbb{E}[\mathbf{V}^\top\mathbf{x}(\mathbf{A}\mathbf{x}+\hat{\mathbf{n}})^\top] \\
    &= \mathbf{V}^\top \mathbb{E}[\mathbf{x}\mathbf{x}^\top]\mathbf{A}^\top
    = \mathbf{V}^\top \mathbf{A}^\top,
\end{split}
\end{equation}
and
\begin{equation}
\begin{split}
    \mathbb{E}[\mathbf{z}\mathbf{z}^\top]
    &=\mathbf{A}\mathbb{E}[\mathbf{x}\mathbf{x}^\top]\mathbf{A}^\top+\mathbb{E}[\hat{\mathbf{n}}\hat{\mathbf{n}}^\top] \\
    &=\mathbf{A}\mathbf{A}^\top+\mathbf{I}_{m+d'}.
\end{split}
\end{equation}
Substituting these into \eqref{eq:orthogonality} yields
\begin{equation}
    \mathbf{V}^\top \mathbf{A}^\top = \mathbf{D}^\top(\mathbf{A}\mathbf{A}^\top+\mathbf{I}_{m+d'}).
\end{equation}
Since $\mathbf{A}\mathbf{A}^\top+\mathbf{I}_{m+d'}\succ \mathbf{0}$, the solution is unique and equals
\begin{equation}
    \mathbf{D}^\star = (\mathbf{A}\mathbf{A}^\top+\mathbf{I}_{m+d'})^{-1}\mathbf{A}\mathbf{V},
\end{equation}
which recovers \eqref{eq:Gi_star_closed_form}.

Partition the decoder $\mathbf{D}^\star$ and matrix $\mathbf{A}$ conformably into two blocks:
\begin{equation}\label{eq:partition_blocks}
  \mathbf{D}^\star = \begin{bmatrix} \mathbf{D}_y^\star \\ \mathbf{D}_w^\star \end{bmatrix},
  \qquad
  \mathbf{A} = \begin{bmatrix} \mathbf{R} \\ \bm{\Sigma}\mathbf{U}^\top \end{bmatrix}.
\end{equation}
Let $\bm\Phi\triangleq \mathbf{A}\mathbf{A}^\top+\mathbf{I}_{m+d'}$. The normal equation $\bm\Phi\mathbf{D}^\star=\mathbf{A}\mathbf{V}$ can be written in block form as
\begin{equation}
\begin{bmatrix}
\bm\Phi_{00} & \bm\Phi_{01}\\
\bm\Phi_{10} & \bm\Phi_{11}
\end{bmatrix}
\begin{bmatrix}
\mathbf{D}_y^\star\\
\mathbf{D}_w^\star
\end{bmatrix}
=
\begin{bmatrix}
\mathbf{R}\mathbf{V}\\
\bm\Sigma\mathbf{U}^\top\mathbf{V}
\end{bmatrix},
\end{equation}
where the block components are defined as
\begin{equation}
\begin{split}
    \bm\Phi_{00} &= \mathbf{I}_m+\mathbf{R}\mathbf{R}^\top, \quad \bm\Phi_{01} = \mathbf{R}\mathbf{U}\bm\Sigma, \\
    \bm\Phi_{10} &= \bm\Phi_{01}^\top, \quad\qquad \bm\Phi_{11} = \mathbf{I}_{d'}+\bm\Sigma^2.
\end{split}
\end{equation}
Eliminating $\mathbf{D}_w^\star$ via the Schur complement of $\bm\Phi_{11}$ gives:
\begin{equation}
  (\bm\Phi_{00} - \bm\Phi_{01}\bm\Phi_{11}^{-1}\bm\Phi_{10}) \mathbf{D}_y^\star
  = \mathbf{R}\mathbf{V} - \bm\Phi_{01}\bm\Phi_{11}^{-1}\bm\Sigma\mathbf{U}^\top\mathbf{V}.
\end{equation}
Using $\bm\Lambda=\bm\Sigma^2(\mathbf{I}_{d'}+\bm\Sigma^2)^{-1}$ and $\bm\Pi=\mathbf{I}-\mathbf{U}\bm\Lambda\mathbf{U}^\top$, this simplifies to
\begin{equation}
    (\mathbf{I}_m+\mathbf{R}\bm\Pi\mathbf{R}^\top)\mathbf{D}_y^\star = \mathbf{R}\bm\Pi\mathbf{V},
\end{equation}
which yields \eqref{eq:Giy_star_svd}. Finally, the second block equation gives
\begin{equation}
    \mathbf{D}_w^\star=\bm\Phi_{11}^{-1}\bm\Sigma\mathbf{U}^\top(\mathbf{V}-\mathbf{R}^\top\mathbf{D}_y^\star),
\end{equation}
which is \eqref{eq:Giw_star_svd}.
\end{proof}

\subsection{Proof of Lemma~\ref{lem:mse_quadratic_E}}\label{subsec:proof_lemma_E}
\begin{proof}
The estimate is $\widehat{\mathbf{w}}_i=\mathbf{D}_{i,y}^\top\mathbf{y}_i+\mathbf{D}_{i,w}^\top\mathbf{w}_i'$.
Substituting the signal model yields
\begin{equation}
\begin{split}
    \mathbf{w}_i-\widehat{\mathbf{w}}_i
    &= \Big(\mathbf{V}_i^\top-\mathbf{D}_{i,w}^\top\mathbf{V}_i'^\top-\mathbf{D}_{i,y}^\top\mathbf{H}_i\mathbf{E}^\top\Big)\mathbf{x} \\
    &\quad -\mathbf{D}_{i,y}^\top\mathbf{n}_i-\mathbf{D}_{i,w}^\top\mathbf{n}_i'.
\end{split}
\end{equation}
Using $\mathbf{x}\sim\mathcal{N}(\mathbf{0},\mathbf{I}_n)$, independent noises with identity covariances, and independence across
$\mathbf{x},\mathbf{n}_i,\mathbf{n}_i'$, the cross terms vanish, and the MSE equals \eqref{eq:mse_E_decomp}.
Dropping the terms constant in $\mathbf{E}$ yields \eqref{eq:precoder_qcqp}.
\end{proof}

\subsection{Proof of Proposition~\ref{prop:opt_precoder}}\label{subsec:proof_precoder}
\begin{proof}
Consider \eqref{eq:precoder_ls}: $\min_{\|\mathbf{E}\|_F^2\le r}\ \|\mathbf{C}-\mathbf{E}\mathbf{B}\|_F^2$.
Using $\mathbf{Q}=\mathbf{B}\mathbf{B}^\top$ and $\mathbf{P}=\mathbf{C}\mathbf{B}^\top$, the objective can be written as
\begin{equation}
    \|\mathbf{C}-\mathbf{E}\mathbf{B}\|_F^2
    =\tr(\mathbf{E}\mathbf{Q}\mathbf{E}^\top)-2\tr(\mathbf{P}^\top\mathbf{E})+\tr(\mathbf{C}\mathbf{C}^\top),
\end{equation}
where the last term is constant in $\mathbf{E}$.
The Lagrangian is
\begin{equation}
\begin{split}
    \mathcal{L}(\mathbf{E},\mu)
    &=\tr(\mathbf{E}\mathbf{Q}\mathbf{E}^\top)-2\tr(\mathbf{P}^\top\mathbf{E})
    +\mu(\|\mathbf{E}\|_F^2-r), \mu\ge 0.
\end{split}
\end{equation}
Stationarity gives
\begin{equation}
    \nabla_{\mathbf{E}}\mathcal{L}
    =2\mathbf{E}(\mathbf{Q}+\mu\mathbf{I}_r)-2\mathbf{P}=\mathbf{0},
\end{equation}
hence
\begin{equation}\label{eq:E_star_mu_proof_final}
    \mathbf{E}^\star(\mu)=\mathbf{P}(\mathbf{Q}+\mu\mathbf{I}_r)^{-1}.
\end{equation}
This proves \eqref{eq:E_star_mu} for any $\mu>0$ since $\mathbf{Q}+\mu\mathbf{I}_r\succ \mathbf{0}$.
For $\mu=0$, if $\mathbf Q$ is singular we may take $\mathbf E^\star(0)=\mathbf P\mathbf Q^\dagger$ as a minimum-norm minimizer.

It remains to characterize $\mu$.
If $\mathbf P=\mathbf 0$, then \eqref{eq:E_star_mu_proof_final} yields $\mathbf E^\star(\mu)=\mathbf 0$ for all $\mu\ge 0$ and the constraint is inactive.
Assume $\mathbf P\neq \mathbf 0$ hereafter.
For $\mu>0$, define
\begin{equation}
    \phi(\mu)\triangleq \|\mathbf{E}^\star(\mu)\|_F^2
    = \tr\!\big(\mathbf{P}(\mathbf{Q}+\mu\mathbf{I}_r)^{-2}\mathbf{P}^\top\big).
\end{equation}
Since $\mathbf{Q}+\mu\mathbf{I}_r\succ \mathbf{0}$, $\phi(\mu)$ is differentiable and
\begin{equation}
\begin{aligned}
\phi'(\mu) &= -2 \operatorname{tr} \left( \mathbf{P} (\mathbf{Q} + \mu \mathbf{I}_r)^{-3} \mathbf{P}^\top \right) \\
           &= -2 \left\| (\mathbf{Q} + \mu \mathbf{I}_r)^{-3/2} \mathbf{P}^\top \right\|_F^2 \\
           &< 0
\end{aligned}
\end{equation}
where strict negativity holds because $\mathbf P\neq \mathbf 0$ and $(\mathbf Q+\mu\mathbf I_r)^{-3/2}\succ\mathbf 0$.
Therefore, $\phi(\mu)$ is continuous and strictly decreasing on $\mu>0$, and satisfies $\phi(\mu)\to 0$ as $\mu\to\infty$.
By complementary slackness, if $\|\mathbf E^\star(0)\|_F^2\le r$ then $\mu^\star=0$; otherwise there exists a unique $\mu^\star>0$ such that $\phi(\mu^\star)=r$.

Moreover, since $\mathbf Q\succeq \mathbf 0$, we have $\|(\mathbf Q+\mu\mathbf I_r)^{-1}\|_2\le 1/\mu$ for $\mu>0$, hence
\begin{equation}
\begin{aligned}
\phi(\mu) &= \left\| \mathbf{P}(\mathbf{Q} + \mu\mathbf{I}_r)^{-1} \right\|_F^2 \\
           &\le \|\mathbf{P}\|_F^2 \cdot \left\| (\mathbf{Q} + \mu\mathbf{I}_r)^{-1} \right\|_2^2 \\
           &\le \frac{\|\mathbf{P}\|_F^2}{\mu^2}
\end{aligned}
\end{equation}
In particular, $\mu_{\rm hi}=\|\mathbf P\|_F/\sqrt r$ ensures $\phi(\mu_{\rm hi})\le r$ and can be used to bracket the unique root in a safeguarded Newton search (Remark~\ref{rem:implementation}).
\end{proof}

\subsection{Proof of Proposition~\ref{prop:convergence}}\label{subsec:proof_convergence}
\begin{proof}
At iteration $t$, the decoder update minimizes the objective for fixed $\mathbf E^{(t-1)}$, hence
\begin{equation}\label{eq:conv_pf_dec}
F(\mathbf E^{(t-1)},\{\mathbf D_i^{(t)}\}_{i\in\mathcal K})
\le
F(\mathbf E^{(t-1)},\{\mathbf D_i^{(t-1)}\}_{i\in\mathcal K}).
\end{equation}
Then, the precoder update minimizes the objective for fixed $\{\mathbf D_i^{(t)}\}_{i\in\mathcal K}$ under the constraint
$\|\mathbf E\|_F^2\le r$, hence
\begin{equation}\label{eq:conv_pf_enc}
F(\mathbf E^{(t)},\{\mathbf D_i^{(t)}\}_{i\in\mathcal K})
\le
F(\mathbf E^{(t-1)},\{\mathbf D_i^{(t)}\}_{i\in\mathcal K}).
\end{equation}
Combining \eqref{eq:conv_pf_dec} and \eqref{eq:conv_pf_enc} yields $F^{(t)} \le F^{(t-1)}$ for all $t\ge 1$,
so $\{F^{(t)}\}_{t\ge 0}$ is monotonically non-increasing.
Since each user MSE is nonnegative, we have $F^{(t)}\ge 0$ for all $t$, and therefore $\{F^{(t)}\}$ converges.

Next, define the precoder feasible set $\mathcal E \triangleq \{\mathbf E\in\mathbb R^{n\times r}:\|\mathbf E\|_F^2\le r\}$, which is compact.
Hence $\{\mathbf E^{(t)}\}_{t\ge 0}\subset \mathcal E$ is bounded.
For each user $i$, the decoder update is given in closed form by Proposition~\ref{prop:lmmse_decoder}:
\[
\mathbf D_i^\star(\mathbf E)
=
\big(\mathbf A_i(\mathbf E)\mathbf A_i(\mathbf E)^\top+\mathbf I_{m_i+d_i'}\big)^{-1}\mathbf A_i(\mathbf E)\mathbf V_i,
\]
where $\mathbf A_i(\mathbf E)$ depends affinely on $\mathbf E$.
Because $\mathbf A_i(\mathbf E)\mathbf A_i(\mathbf E)^\top+\mathbf I_{m_i+d_i'} \succeq \mathbf I_{m_i+d_i'} \succ \mathbf 0$ for all $\mathbf E \in \mathcal E$, the matrix inversion is well-defined and continuous. 
Thus, $\mathbf D_i^\star(\mathbf E)$ is a continuous function of $\mathbf E$.
Therefore, because $\mathcal E$ is compact, the image $\{\mathbf D_i^\star(\mathbf E):\mathbf E\in\mathcal E\}$ is compact, implying that $\{\mathbf D_i^{(t)}\}_{t\ge 0}$ is bounded for every $i\in\mathcal K$.
Hence the full iterate sequence $\{(\mathbf E^{(t)},\{\mathbf D_i^{(t)}\}_{i\in\mathcal K})\}_{t\ge 0}$ is bounded and admits at least one accumulation point.

Algorithm~\ref{alg:wsmse_solver_compact} is an exact two-block coordinate-descent method for the continuously differentiable problem \eqref{eq:prob_joint}:
for fixed $\mathbf E$, the decoder block is minimized exactly by Proposition~\ref{prop:lmmse_decoder};
for fixed $\{\mathbf D_i\}_{i\in\mathcal K}$, the precoder block is minimized exactly by Proposition~\ref{prop:opt_precoder}.
Specifically, the precoder minimizer is unique when the power constraint is active ($\mu^\star>0$). When inactive ($\mu^\star=0$), the minimum-norm selection $\mathbf P\mathbf Q^\dagger$ yields a well-defined single-valued update, ensuring that the block-minimization mapping is closed as required by \cite{tseng2001convergence}.
Therefore, by standard convergence results for exact two-block coordinate-descent methods \cite{grippo2000convergence,tseng2001convergence}, every accumulation point of the iterate sequence is a coordinatewise minimum of \eqref{eq:prob_joint}.

Finally, let $(\bar{\mathbf E},\{\bar{\mathbf D}_i\}_{i\in\mathcal K})$ be any accumulation point. 
Since it is coordinatewise minimal, each decoder block satisfies the unconstrained optimality condition $\nabla_{\mathbf D_i} F = \mathbf 0$, while $\bar{\mathbf E}$ satisfies the KKT conditions of the precoder subproblem $\min_{\mathbf E \in \mathcal E} F(\mathbf E, \{\bar{\mathbf D}_i\}_{i\in\mathcal K})$ (i.e., stationarity of the Lagrangian with complementary slackness for the constraint $\|\mathbf E\|_F^2 \le r$). 
Together, these are precisely the KKT conditions of the joint problem~\eqref{eq:prob_joint}.
\end{proof}

\section{Numerical Results}\label{sec:numerical_results}
This section evaluates the proposed WLCBC framework, also referred to as the over-the-air (OTA) setting, through Monte Carlo simulations under linear-Gaussian models.
We consider three experiments over two channel settings.
Experiment~1 uses an identity-channel AWGN broadcast link in a two-user comparison with the Sun--Jafar LCBC (SJ-LCBC)~\cite{sun2019capacity}, including an examination of the respective roles of precoder and decoder optimization.
Experiments~2 and~3 use i.i.d.\ real Gaussian MIMO broadcast channels.
Experiment~2 studies the channel-use-versus-distortion tradeoff across different antenna configurations, including heterogeneous per-user SNRs, the convergence behavior of AO, and sensitivity to random initialization, while Experiment~3 compares WLCBC with a matched direct-output MIMO-BC baseline to examine the impact of its encoder-subspace restriction. \begin{addedtext}
A key observation in Experiment~2 is that, at sufficiently high SNRs, a sharp
drop in MSE from the interference-limited regime to the noise-limited
regime as the number of channel uses increases may indicate, for each fixed
antenna configuration, the minimum number of channel uses required by the
considered linear WLCBC design to attain a prescribed distortion level.
This provides an empirical connection to the classical LCBC.
\end{addedtext}
\subsection{Common Monte Carlo setup and performance metric}\label{subsec:exp_setup}
For each user $i\in\mathcal K\triangleq\{1,\ldots,K\}$, the demand matrix $\mathbf V_i\in\mathbb R^{n\times d_i}$ and
the side-information matrix $\mathbf V_i'\in\mathbb R^{n\times d_i'}$ are generated as independent random orthonormal bases,
i.e., $\mathbf V_i^\top\mathbf V_i=\mathbf I_{d_i}$ and $\mathbf V_i'^{\!\top}\mathbf V_i'=\mathbf I_{d_i'}$,
corresponding to isotropically random $d_i$- and $d_i'$-dimensional subspaces of $\mathbb R^n$.

Whenever channel noise and side-information noise are isotropic Gaussian, we parameterize their variances by the channel and side-information SNRs (in dB), denoted by $\gamma_c$ and $\gamma_s$, respectively, via
\begin{equation}\label{eq:sigma_from_snr_exp}
\sigma_{\mathrm{ch}}^2 = 10^{-\gamma_c/10},\qquad
\sigma_{\mathrm{side},i}^2 = 10^{-\gamma_s/10}.
\end{equation}
Unless otherwise stated, we use uniform user weights $\omega_i=1/|\mathcal K|$.

The AO-based WLCBC optimization terminates when the iteration count reaches $t_{\max}=250$ or when the relative convergence tolerance falls below $\epsilon=10^{-10}$.
For the precoder subproblem, the Lagrange multiplier $\mu^\star$ is computed via the safeguarded Newton method described in Remark~\ref{rem:implementation}.
\begin{addedtext}
We report the MSE for user $i$ as
$
\mathcal{E}_i(\mathbf{E},\mathbf{D}_i)
\triangleq
\mathbb{E}\!\left[
\|\mathbf{w}_i-\mathbf{D}_i^\top\mathbf{z}_i\|^2
\right].
$
The corresponding weighted sum-MSE is
\begin{equation}\label{eq:rel_mse_total_clip_exp}
\mathcal{E}_{\mathrm{total}}
=\sum_{i \in \mathcal{K}} \omega_i\,
\mathcal{E}_i(\mathbf{E}, \mathbf{D}_i).
\end{equation}
This quantity is reported in
Figs.~\ref{fig:simple_2user}, \ref{fig:complex_2user},
\ref{fig:exp2_mimo_Tsweep}, \ref{fig:ao_convergence}, and
\ref{fig:wlcbc_same_task_mimo}, whereas
Fig.~\ref{fig:heterogeneous_user_snr} reports the individual user MSEs
$\{\mathcal E_i\}_{i\in\mathcal K}$.
\end{addedtext}
\subsection{Experiment 1: WLCBC versus SJ-LCBC baseline (two-user)}\label{subsec:exp1}
We specialize in $K=2$ and benchmark WLCBC against SJ-LCBC.
The purpose of this experiment is to compare robustness against a canonical LCBC reference in a noisy setting; it is not intended as a comparison against all possible noisy-broadcast baselines.

In this experiment, we use an identity-channel SISO AWGN broadcast link over
$T=r$ channel uses, i.e., $M=N_i=1$ and
$\mathbf H_i=\mathbf I_r$ for all $i$, so that $m_i=r$ and
\begin{equation}\label{eq:noise_models_exp}
\mathbf n_i \sim \mathcal N(\mathbf 0, \sigma_{\mathrm{ch}}^2 \mathbf I_{r}), \qquad
\mathbf n_i' \sim \mathcal N(\mathbf 0, \sigma_{\mathrm{side},i}^2 \mathbf I_{d_i'}).
\end{equation}
In the SNR sweeps below, we set $\gamma_c=\gamma_s$ and denote the common SNR by $\gamma$.

Let $c\in\{\mathrm{Base},\mathrm{OTA}\}$ denote the scheme index; the precoder is $\mathbf E_c\in\mathbb R^{n\times r}$ and
the user-$i$ decoder is $\mathbf D_{i,c}\in\mathbb R^{(r+d_i')\times d_i}$, so that
\[
\widehat{\mathbf w}_{i,c}
=
\mathbf D_{i,c}^\top
\begin{bmatrix}
\mathbf y_i\\
\mathbf w_i'
\end{bmatrix}.
\]
Both schemes are compared using $\mathcal E_{\mathrm{total}}$ in~\eqref{eq:rel_mse_total_clip_exp} under the same constraint $\|\mathbf E_c\|_F^2\le r$.

\textbf{Baseline (SJ-LCBC).}
SJ-LCBC is an algebraic alignment-based construction designed for noiseless side information (and noiseless channels)~\cite{sun2019capacity}.
In each trial, after generating $(\mathbf V_i,\mathbf V_i')$, we fix the broadcast dimension $r$ for \emph{both} schemes
according to the noiseless subspace-decomposition prescription in~\cite{sun2019capacity}
(which yields $r=1$ for the toy instance below and $r=4$ for the heterogeneous instance below),
construct the corresponding SJ precoder/decoder, and scale the precoder to satisfy $\|\mathbf E_{\mathrm{Base}}\|_F^2\le r$.
The resulting SJ scheme is then applied directly to the noisy observations.

\textbf{Proposed WLCBC (OTA).}
We use over-the-air (OTA) to refer to the WLCBC setting, which jointly optimizes the precoder and decoders by alternating minimization of the MSE objective under channel noise and noisy side information.

\emph{(a) Toy instance:}
We first consider $(K,n,r,d_i,d_i')=(2,2,1,1,1)$, which appears in~\cite{sun2019capacity} as a canonical two-user example.
Let $\mathbf v_i,\mathbf v_i'\in\mathbb R^2$ be the columns of $\mathbf V_i,\mathbf V_i'$, define
$\mathbf G\triangleq[\mathbf v_1'\ \mathbf v_2']$, and let $\mathbf e_1=[1,0]^\top$, $\mathbf e_2=[0,1]^\top$.
The (unnormalized) SJ precoder is
\begin{equation}\label{eq:exp1_Eraw}
\mathbf E_{\mathrm{raw}}
=
[\mathbf v_1' \ \mathbf 0]\mathbf G^{-1}\mathbf v_2
+
[\mathbf 0 \ \mathbf v_2']\mathbf G^{-1}\mathbf v_1,
\end{equation}
which we normalize as $\mathbf E_{\mathrm{Base}}=\sqrt r\,\mathbf E_{\mathrm{raw}}/\|\mathbf E_{\mathrm{raw}}\|_F$.
For user $i\in\{1,2\}$ (with $j=3-i$), the baseline decoder is a two-tap linear combiner of the received symbol
and the one-dimensional side-information observation:
\begin{equation}\label{eq:exp1_Dbase}
\mathbf D_{i,\mathrm{Base}}
=
\begin{bmatrix}
\mathbf e_i^\top \mathbf G^{-1}(\mathbf v_i-\mathbf v_j)\\[2pt]
\|\mathbf E_{\mathrm{Base}}\|_F
\end{bmatrix}
\in \mathbb R^{2\times 1}.
\end{equation}
Fig.~\ref{fig:simple_2user} plots $\mathcal E_{\mathrm{total}}$ versus the common SNR $\gamma$.
As $\gamma$ increases, the gap shrinks, and the two schemes nearly coincide at $\gamma=20$~dB.
This behavior is consistent with the fact that SJ-LCBC is tailored to the noiseless/alignment regime, whereas WLCBC is substantially more robust in the noisy regime and approaches the same high-SNR limit as noise vanishes.
\begin{figure}[t]
  \centering
  \includegraphics[width=\columnwidth]{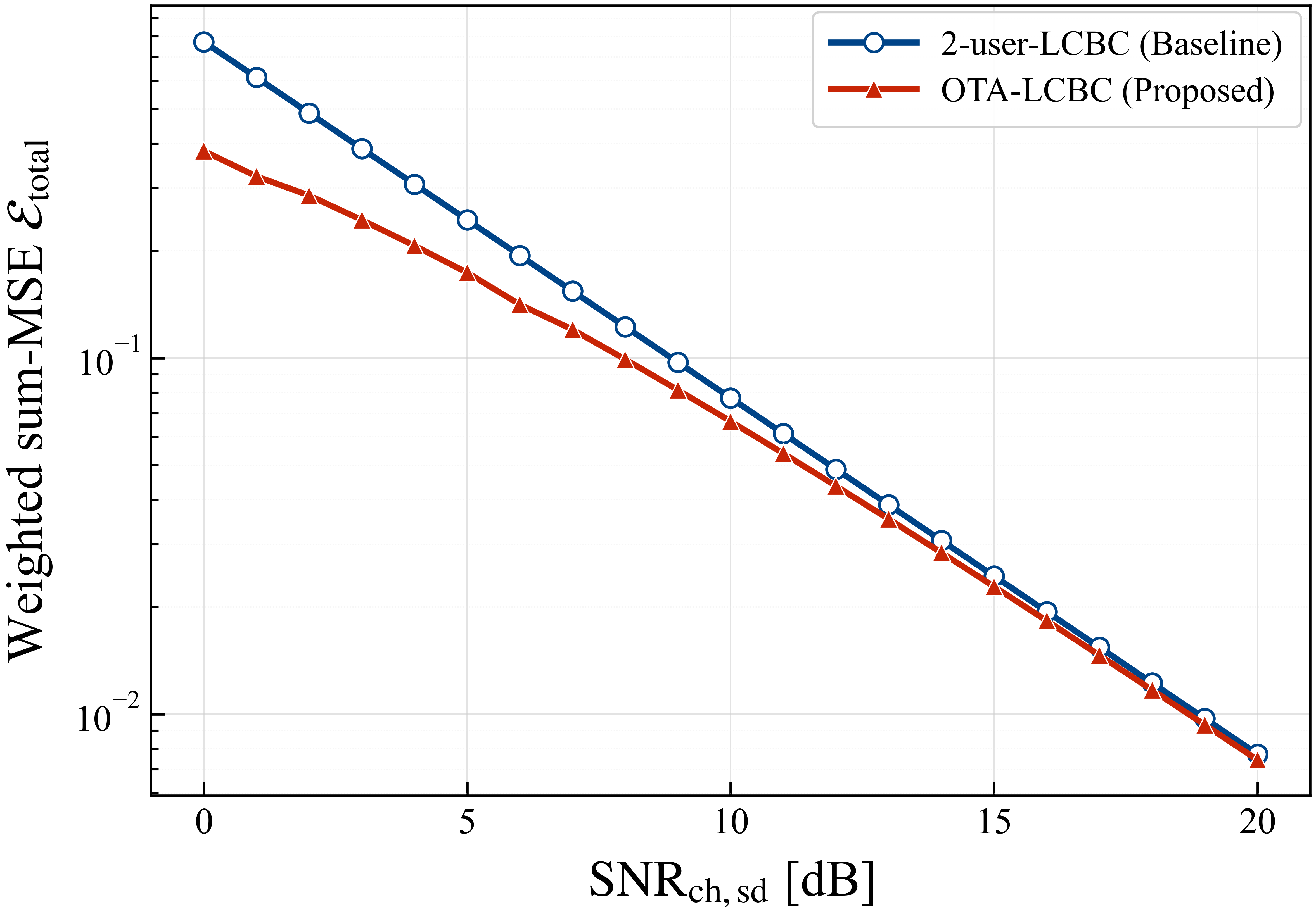}
  \caption{Two-user toy instance $(n,r)=(2,1)$ with
    $(d_1,d_1')=(1,1)$ and $(d_2,d_2')=(1,1)$:
    weighted sum-MSE $\mathcal E_{\mathrm{total}}$ versus
    $\gamma=\mathrm{SNR}_{\mathrm{ch,side}}$ (common channel and
    side-information SNR).
    Here $M=N_1=N_2=1$, $T=r=1$.}
  \label{fig:simple_2user}
\end{figure}

\emph{(b) Heterogeneous instance, precoder/decoder roles:}
We next consider a higher-dimensional heterogeneous configuration $n=7$, $r=4$, $(d_1,d_1')=(4,2)$, and $(d_2,d_2')=(3,2)$.
To pinpoint where the gain comes from, we compare three designs in Fig.~\ref{fig:complex_2user}:
(i) \emph{SJ-LCBC (precoder and decoder)}: the original SJ precoder and decoder applied to the noisy setting;
(ii) \emph{SJ precoder with OTA-optimized decoders}: the SJ precoder is kept fixed, while the decoders are re-optimized using the OTA decoder update;
(iii) \emph{WLCBC (AO precoder and decoder)}: both precoder and decoders are optimized by the proposed AO procedure. 

\begin{addedtext}Fig.~\ref{fig:complex_2user} shows that noise-aware decoder optimization provides substantial robustness gains in the low-SNR regime, but is not sufficient by itself. The remaining performance improvement, particularly at moderate and high SNRs, is primarily attributable to optimizing the precoder.
\end{addedtext}
At high SNR ($\gamma\gtrsim 30$~dB), the two curves that share the SJ
precoder nearly overlap, indicating that decoder mismatch becomes negligible
as noise vanishes; the remaining gap to WLCBC (roughly $8$ dB for $\mathcal E_{\mathrm{total}}\approx 10^{-2}$) reflects the benefit of a
noise-aware precoder that avoids ill-conditioned alignment and mitigates
noise enhancement.
\begin{figure}[!t]
  \centering
  \includegraphics[width=\columnwidth]{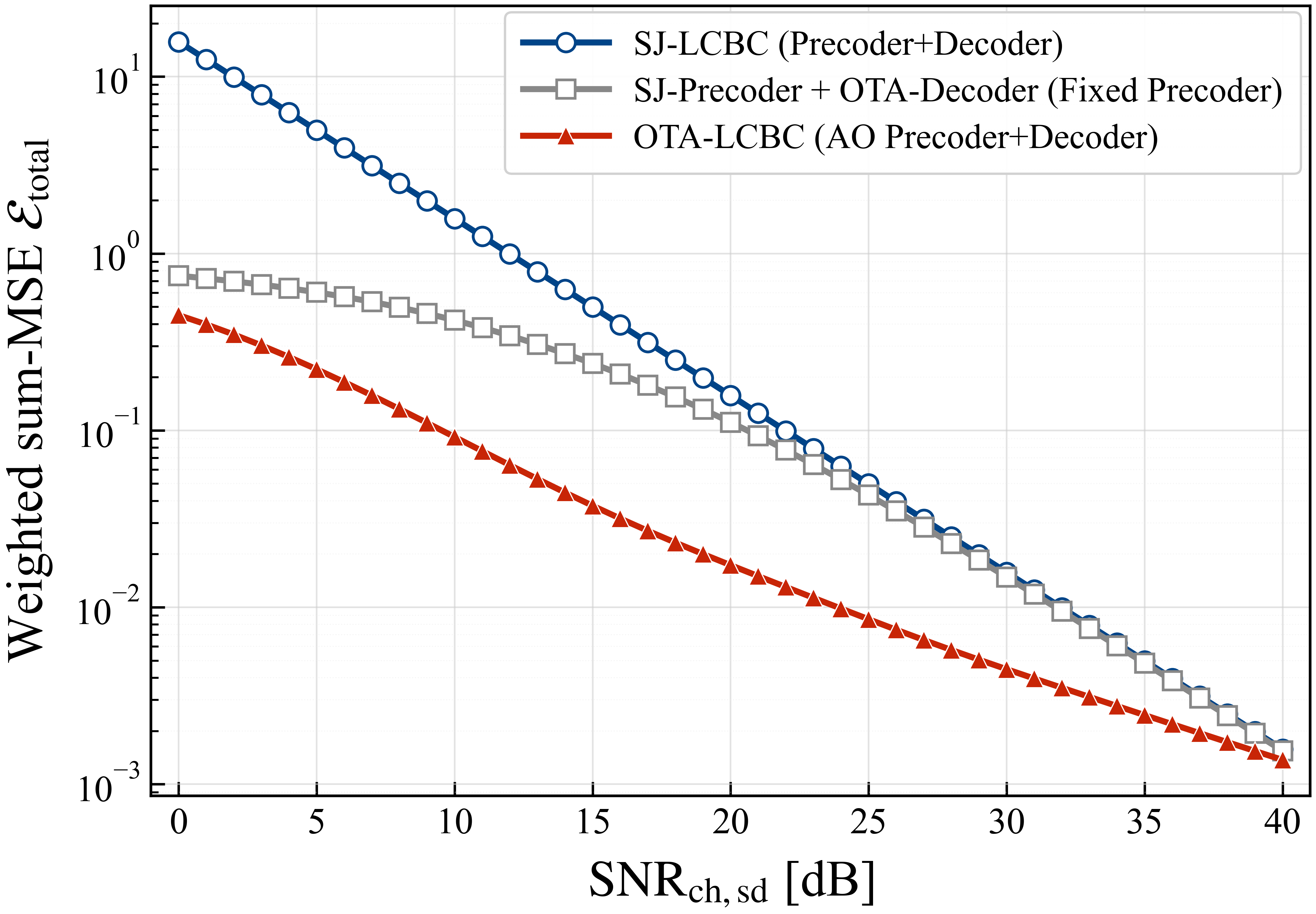}
  \caption{Heterogeneous two-user instance $(n,r)=(7,4)$ with $(d_1,d_1')=(4,2)$ and $(d_2,d_2')=(3,2)$:
  ablation comparing (i) SJ-LCBC with the Sun--Jafar precoder and decoder, (ii) the fixed Sun--Jafar precoder with OTA-optimized decoders,
  and (iii) full WLCBC with AO precoder and decoder, versus $\gamma=\mathrm{SNR}_{\mathrm{ch,side}}$ (common channel and side-information SNR). Here $M=N_1=N_2=1$, $T=r=4$.}
  \label{fig:complex_2user}
\end{figure}
\begin{figure*}[t]
  \centering
  \includegraphics[width=\linewidth]{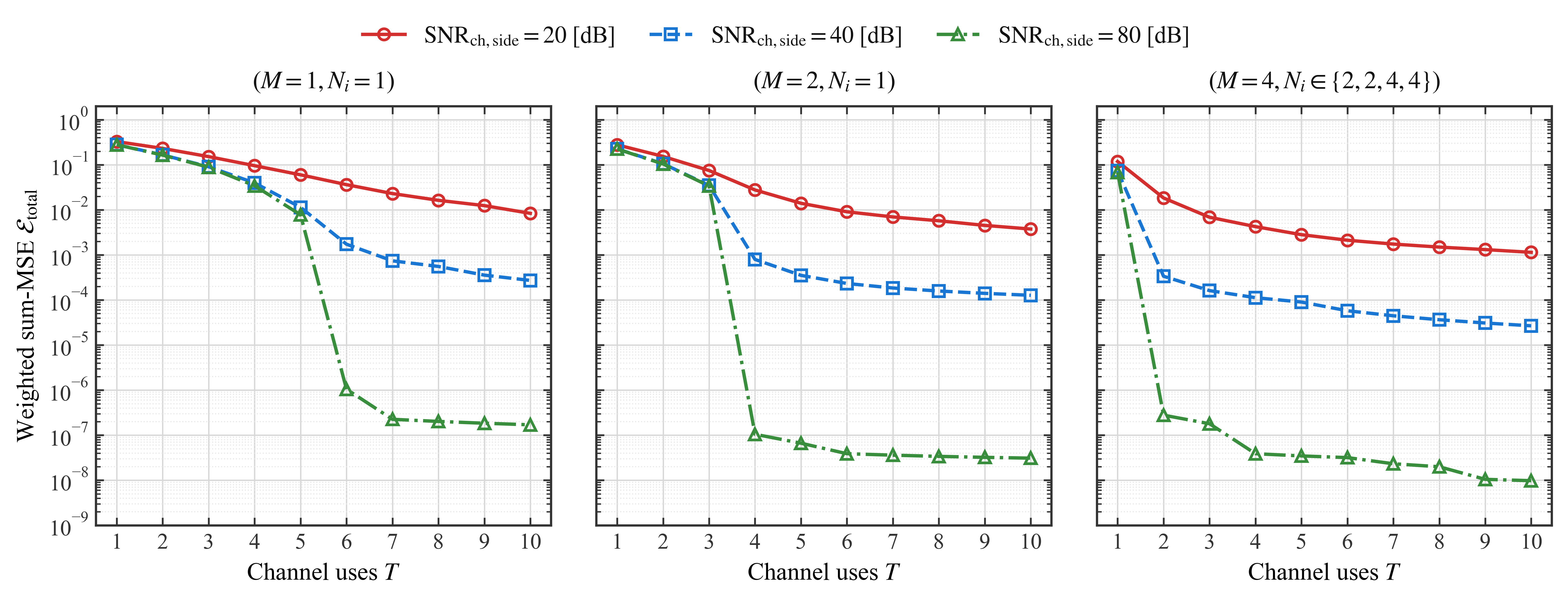}
  \caption{$K=4$-user MIMO Gaussian WLCBC, $n=24$ and $(d_i,d_i')=(4,12)$: total MSE $\mathcal E_{\mathrm{total}}$ versus channel uses $T$ for $\mathrm{SNR}_{\mathrm{ch}}\in\{20,40,80\}$~dB.
  The antenna configurations are shown above each subplot. Recall $r=TM$ and $m_i=TN_i$ in~\eqref{eq:stacking_def}, and $\|\mathbf E\|_F^2\le r$ in~\eqref{eq:power_constraint}.}
  \label{fig:exp2_mimo_Tsweep}
\end{figure*}
\begin{addedtext}
\subsection{Experiment 2: Channel-use-versus-distortion transition in Gaussian MIMO BC}
\label{subsec:exp2_mimo_Tsweep}
We evaluate WLCBC over i.i.d.\ real Gaussian MIMO broadcast channels
and investigate how the channel-use-versus-distortion tradeoff connects
to the communication-efficiency viewpoint of LCBC.
\end{addedtext}
\subsubsection{Gaussian MIMO setup and dimensionality scaling}
For each user $i$ and channel use $t$, the entries of
$\mathbf H_i^{(\mathrm p)}[t]\in\mathbb R^{N_i\times M}$ are drawn i.i.d.\ as
\[
[\mathbf H_i^{(\mathrm p)}[t]]_{a,b}
\sim
\mathcal N\!\left(0,\frac{1}{M}\right),
\]
independently across $(i,t)$.
The $1/M$ variance normalization yields unit average gain per receive antenna, i.e.,
\[
\mathbb E\!\left[
\mathbf H_i^{(\mathrm p)}[t]
\mathbf H_i^{(\mathrm p)}[t]^{\!\top}
\right]
=
\mathbf I_{N_i}.
\]

As in~\eqref{eq:stacking_def}, transmitting over $T$ channel uses induces the stacked dimensions
$r=TM$ and $m_i=TN_i$.
We impose the same block power constraint $\|\mathbf E\|_F^2\le r=TM$, which corresponds to unit average power per transmit dimension
(antenna--time slot).
Under the above channel normalization, the nominal received signal power per receive antenna satisfies
\[
\frac{1}{N_i}\,
\mathbb E\!\left[
\|\mathbf H_i^{(\mathrm p)}[t]\mathbf s[t]\|_2^2
\right]
=
\frac{1}{M}\,
\mathbb E\!\left[
\|\mathbf s[t]\|_2^2
\right]
\le 1,
\]
so increasing $M$ does not artificially increase SNR; it only increases the available spatial DoF in the stacked model.
Accordingly, we parameterize the channel noise variance directly by the target receive SNR (in dB) as
$\sigma_{\mathrm{ch}}^2 = 10^{-\gamma_c/10}$, so that $\gamma_c$ represents the nominal per-antenna receive SNR.

We fix a $K=4$-user instance with $n=24$ and $(d_i,d_i')=(4,12)$ for all $i\in\mathcal K$, and sweep $T\in\{1,\ldots,10\}$.
The side-information subspaces are generated as in Section~\ref{subsec:exp_setup}.
\begin{addedtext}The choice \(d_i'=12\) is made relative to the source dimension
\(n=24\).
Since \(\mathbf V_i\) and \(\mathbf V_i'\) are independently generated
random orthonormal bases, \(d_i'=n/2\) provides substantial but
incomplete side information: on average, half of the demand energy is
aligned with the side-information subspace, while the corresponding
observations remain noisy. In general, increasing \(d_i'\) reduces the
posterior residual uncertainty over more source directions and therefore
tends to lower the MSE and may shift the waterfall to a smaller \(T\).
\end{addedtext}

Throughout this experiment, the channel and side-information SNRs
are matched for each user. For
Fig.~\ref{fig:exp2_mimo_Tsweep}, we set
$\gamma_c=\gamma_s$ and sweep their common value over
$\{20,40,80\}$~dB.
Three antenna configurations are compared in Fig.~\ref{fig:exp2_mimo_Tsweep}:
(i) SISO $(M=1,\,N_i=1)$,
(ii) MISO $(M=2,\,N_i=1)$,
and (iii) heterogeneous MIMO $(M=4,\,N_i\in\{2,2,4,4\})$.

Because the per-channel-use/per-antenna power constraint is kept fixed, increasing $T$ increases both the number of available observations and the total energy available over the block.

\subsubsection{Results: Interference-limited to noise-limited transition}
Fig.~\ref{fig:exp2_mimo_Tsweep} exhibits a consistent
\emph{two-regime behavior}, most visible at high SNR.
For small $T$, the distortion is relatively insensitive to SNR: even
large reductions in noise yield only marginal MSE improvement.
This is the \emph{interference-limited regime}.
The stacked channel does not yet provide enough effective
spatio-temporal DoF to reliably resolve all users' demand components,
so the residual error is dominated by interference-limited geometry
rather than noise.
Once $T$ exceeds a configuration-dependent threshold $T^\star$, the
distortion undergoes a sharp ``waterfall'' and drops by multiple orders
of magnitude, after which the curves become strongly SNR-dependent,
consistent with a \emph{noise-limited regime}.

\begin{addedtext}
The observed waterfall can be interpreted through a simple dimension
count. The stacked transmit and receive dimensions are \(r=TM\) and
\(m_i=TN_i\), so the channel-provided source subspace available to user
\(i\) has dimension at most \(\min\{n,r,m_i\}\). In the present
random-subspace setting, \((d_i,d_i')=(4,12)\) and \(d_i+d_i'<n\), so
the demand generically contributes four dimensions outside the
side-information subspace. This yields the necessary per-user condition
\(\min\{TM,TN_i\}\ge4\), corresponding to \(T\ge4\) for SISO and MISO
and \(T\ge2\) for the bottleneck users in heterogeneous MIMO. These
counts are consistent with the observed MISO and MIMO waterfalls, while
the later SISO transition is consistent with additional multiuser
coupling through the common transmit subspace. Also, for a target distortion \(\delta\), the first simulated \(T\) at which the optimized MSE falls below \(\delta\) provides an empirical estimate of the required number of channel uses under the proposed WLCBC design.
\end{addedtext}
\begin{addedtext}
\subsubsection{Per-user distortion under heterogeneous SNRs}
\label{subsubsec:heterogeneous_user_snr}
We revisit its
MISO configuration with $K=4$, $n=24$, $(d_i,d_i')=(4,12)$,
$M=2$, $N_i=1$, and $T\in\{1,\ldots,10\}$.
Users $1$--$4$ have
$\mathrm{SNR}_{\mathrm{ch},i}
=\mathrm{SNR}_{\mathrm{side},i}
\in\{10,20,40,80\}$~dB, respectively, and are jointly optimized
with uniform weights and a common precoder.
\begin{figure}[!t]
  \centering
  \includegraphics[width=\columnwidth]{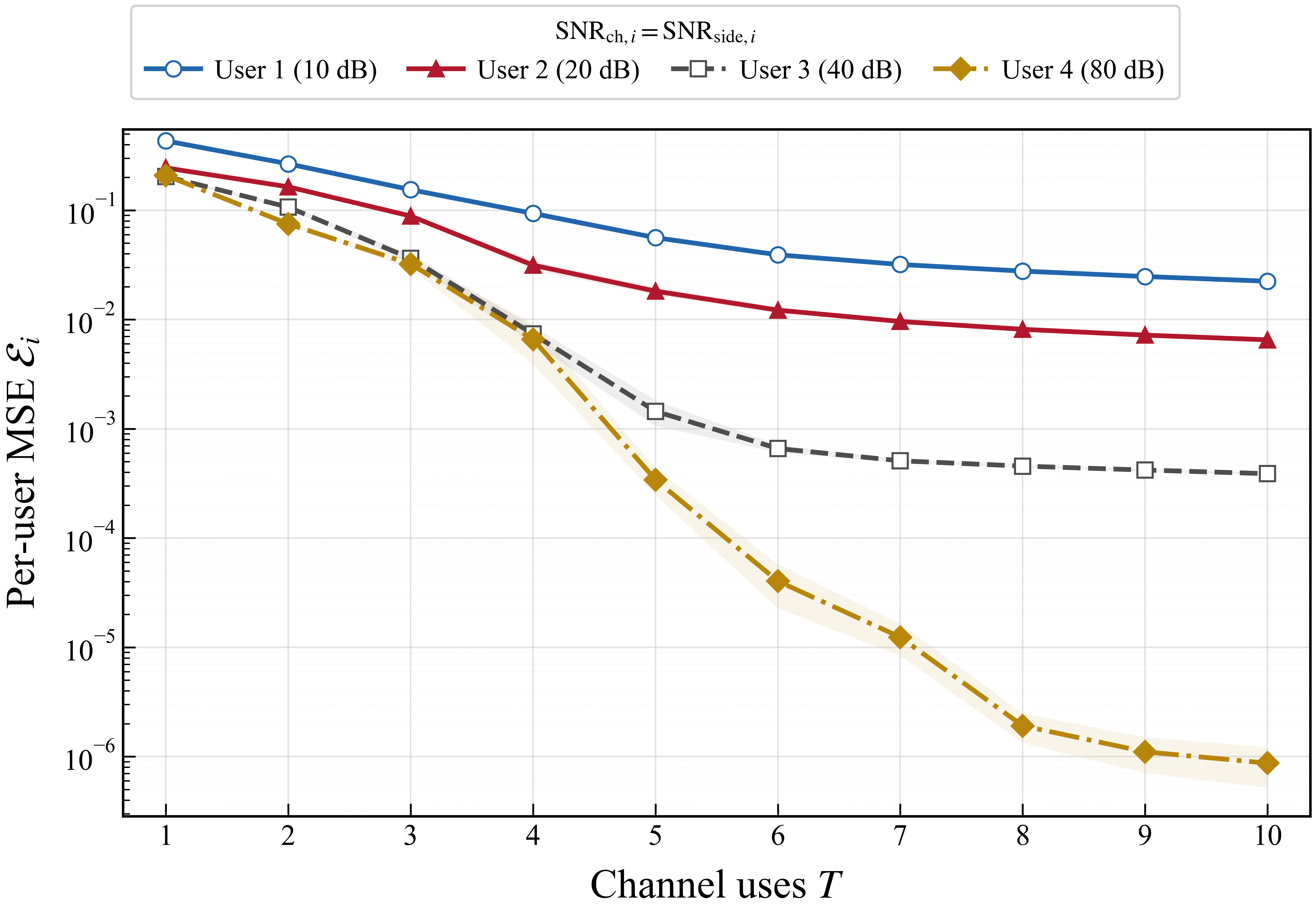}
  \caption{Per-user MSE versus the number of channel uses $T$ for
  $K=4$, $n=24$, $(d_i,d_i')=(4,12)$, $M=2$, and $N_i=1$.
  Users $1$--$4$ have
  $\mathrm{SNR}_{\mathrm{ch},i}
  =\mathrm{SNR}_{\mathrm{side},i}
  \in\{10,20,40,80\}$~dB, respectively.}
  \label{fig:heterogeneous_user_snr}
\end{figure}
Fig.~\ref{fig:heterogeneous_user_snr} reveals a regime-dependent
role of SNR.
For $T\le 4$, the $40$- and $80$-dB curves remain close despite
their $40$-dB SNR gap, and both undergo a waterfall near
$T=5$.
Their nearly common transition point suggests that, at high SNR, the
onset of the waterfall is governed primarily by the channel use bottleneck rather than by noise.
After this bottleneck is relieved, the two curves separate sharply
according to their respective noise levels.
Thus, SNR has limited leverage before sufficient transmission
dimensions become available, but becomes decisive afterward.
The $10$- and $20$-dB curves instead decrease more smoothly because
noise masks the sharp separation between the interference- and
noise-limited regimes.
Since all users share one precoder, these curves reflect a coupled
multiuser allocation rather than four independent point-to-point
links.

\subsubsection{AO convergence across SNRs}
\label{subsubsec:ao_convergence}
We next examine the iteration-wise behavior of AO for a fixed
heterogeneous-MIMO realization with $K=4$, $n=24$,
$(d_i,d_i')=(4,12)$, $M=4$,
$(N_1,N_2,N_3,N_4)=(2,2,4,4)$, and $T=2$.
The common channel and side-information SNR is $20$, $40$, or
$80$~dB.

Because the pointwise minimum of non-increasing sequences is also
non-increasing, the envelopes in Fig.~\ref{fig:ao_convergence}
inherit the same descent property.
The initial decrease is rapid: by iteration $10$, the objective is
reduced by factors of approximately $7$, $34$, and $66$ at $20$,
$40$, and $80$~dB, respectively.

The late-stage behavior, however, depends strongly on SNR.
From iteration $50$ to $250$, the $20$- and $40$-dB curves decrease
by only about $12\%$ and $25\%$, whereas the $80$-dB curve falls by
nearly three additional orders of magnitude.
This contrast is consistent with the lower-SNR cases becoming
noise dominated relatively early, while the much lower noise floor
at $80$~dB keeps small residual errors visible to subsequent
AO updates.
The terminal values,
$1.66\times10^{-2}$, $3.70\times10^{-4}$, and
$4.62\times10^{-8}$, also track the corresponding noise-variance
scales $10^{-2}$, $10^{-4}$, and $10^{-8}$.
Hence, the longer high-SNR tail reflects a lower observable error
floor rather than weaker descent.
\begin{figure}[!t]
  \centering
  \includegraphics[width=\columnwidth]{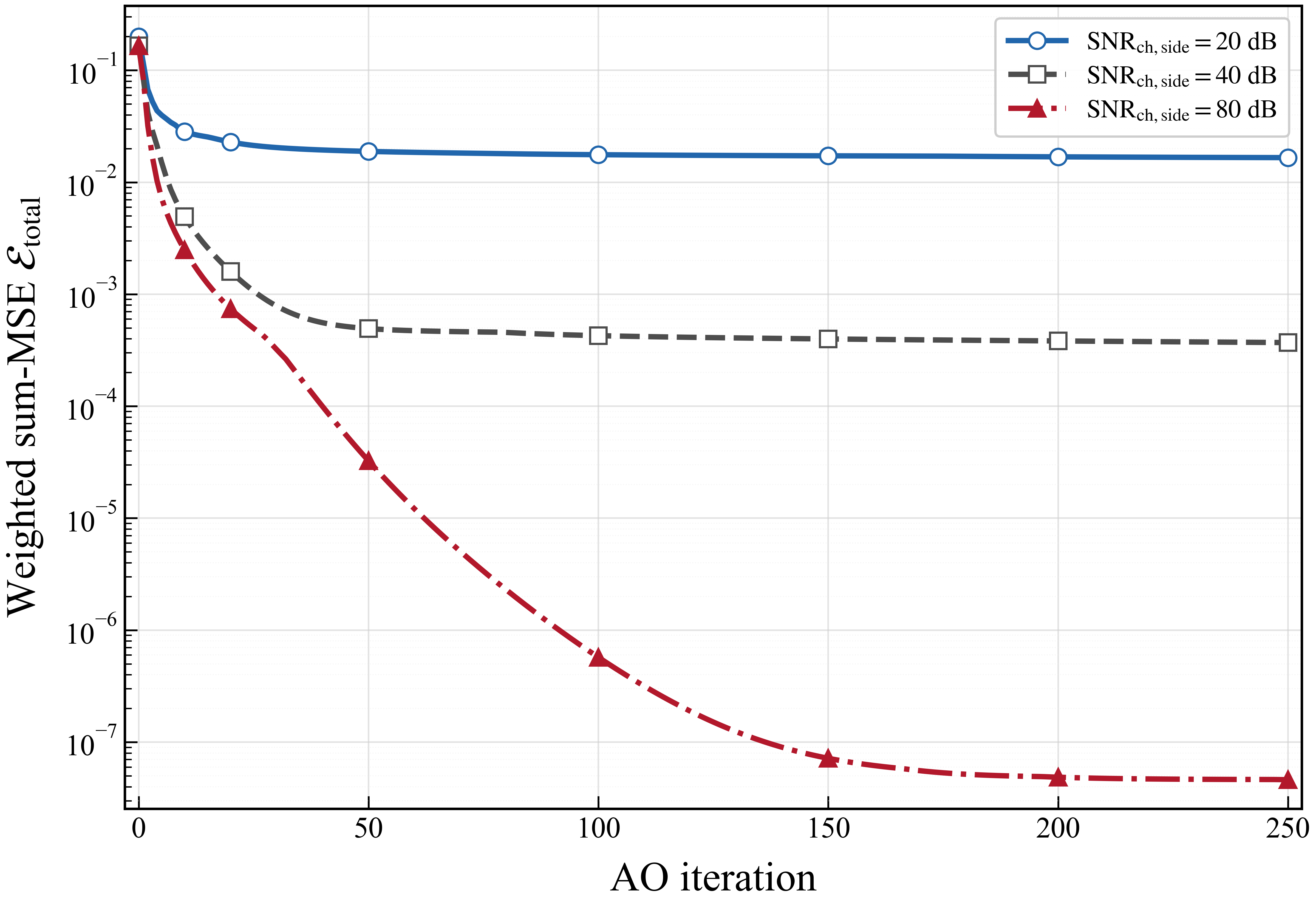}
  \caption{AO convergence (pointwise lower envelopes of the unclipped
  $\mathcal E_{\mathrm{total}}$) for $(K,n,M,T)=(4,24,4,2)$,
  $(d_i,d_i')=(4,12)$, $(N_i)_{i=1}^4=(2,2,4,4)$, and common
  channel/side-information SNRs of $20$, $40$, and $80$~dB.}
  \label{fig:ao_convergence}
\end{figure}

We further examine the sensitivity of AO to random initialization using
the same dimensions as in Fig.~\ref{fig:ao_convergence}. For each
$R\in\{1,5,20,50,100\}$, we report the minimum MSE
among the $R$ starts of the same random-start sequences.
As shown in Table~\ref{tab:ao_multistart}, the best objective values
stabilize after a moderate number of starts, with negligible improvement
from $R=50$ to $R=100$. Thus, AO exhibits limited sensitivity to the
random-start budget for this realization.

\begin{table}[!t]
\color{IEEEAddBlue}
\caption{Best MSE among $R$ random starts
on one fixed MIMO realization.}
\label{tab:ao_multistart}
\centering
\footnotesize
\setlength{\tabcolsep}{3.2pt}
\renewcommand{\arraystretch}{1.05}
\begin{tabular}{@{}rccc@{}}
\toprule
& \multicolumn{3}{c}{$\mathrm{SNR}_{\mathrm{ch,side}}$} \\
\cmidrule(lr){2-4}
$R$ & $20$ dB & $40$ dB & $80$ dB \\
\midrule
1
& $1.6267\times10^{-2}$
& $2.4701\times10^{-4}$
& $1.3741\times10^{-7}$ \\
5
& $1.6267\times10^{-2}$
& $2.2014\times10^{-4}$
& $6.4918\times10^{-8}$ \\
20
& $1.6081\times10^{-2}$
& $2.2014\times10^{-4}$
& $5.4543\times10^{-8}$ \\
50
& $1.6074\times10^{-2}$
& $2.0642\times10^{-4}$
& $5.4543\times10^{-8}$ \\
100
& $1.6074\times10^{-2}$
& $2.0642\times10^{-4}$
& $5.4543\times10^{-8}$ \\
\bottomrule
\end{tabular}
\end{table}
\subsection{Experiment 3: WLCBC versus a matched direct-output MIMO-BC baseline}
\label{subsec:exp3_same_task_mimo}
We next compare WLCBC with a matched \emph{same-task direct-output MIMO-BC}
baseline. Define
\[
\mathbf V_{\mathrm{tot}}
\triangleq
[\mathbf V_1,\ldots,\mathbf V_K],
\;
\mathbf w_{\mathrm{tot}}
\triangleq
[\mathbf w_1^\top,\ldots,\mathbf w_K^\top]^\top
=
\mathbf V_{\mathrm{tot}}^\top\mathbf x .
\]
Similar to traditional MIMO-BC schemes, the baseline considers only each user's 
requested output streams and then transmits their linear mixtures:
\begin{equation}
\mathbf s_{\mathrm{dir}}
=
\mathbf P\mathbf w_{\mathrm{tot}}
=
\mathbf P\mathbf V_{\mathrm{tot}}^\top\mathbf x,
\qquad
\mathbf E_{\mathrm{dir}}
=
\mathbf V_{\mathrm{tot}}\mathbf P^\top .
\label{eq:same_task_direct_output_2}
\end{equation}
Consequently, its feasible broadcast directions satisfy
\[
\operatorname{col}(\mathbf E_{\mathrm{dir}})
\subseteq
\underbrace{\Span(\mathbf V_{\mathrm{tot}})}_{
\substack{\text{direct-output baseline}\\\text{can use}}}
\subseteq
\underbrace{\mathbb R^n}_{\text{WLCBC can use}}.
\]
Thus, under the same transmit-dimension and power constraints, the
direct-output baseline's precoder is confined to the aggregate demand span $\Span(\mathbf V_{\mathrm{tot}})$, whereas
WLCBC can select precoding directions from the full source space $\mathbb R^n$. We also
equip the baseline with the same side-information-aware conditional LMMSE
decoder as WLCBC, so that the comparison focuses on this encoder-subspace
restriction rather than on decoder capability.

We use the same setting as in Experiment~2, except that
$\gamma_c=\gamma_s=40$~dB. Since $M=2$, the available transmit dimension is
$r=TM=2T$. The aggregate demand rank is
$
q\triangleq\rank(\mathbf V_{\mathrm{tot}})=16.
$
For each realization, the two schemes use identical demand and
side-information subspaces, channels, noise powers, and transmit-power
constraints.
\begin{figure}[htbp]
  \centering
  \includegraphics[width=\columnwidth]{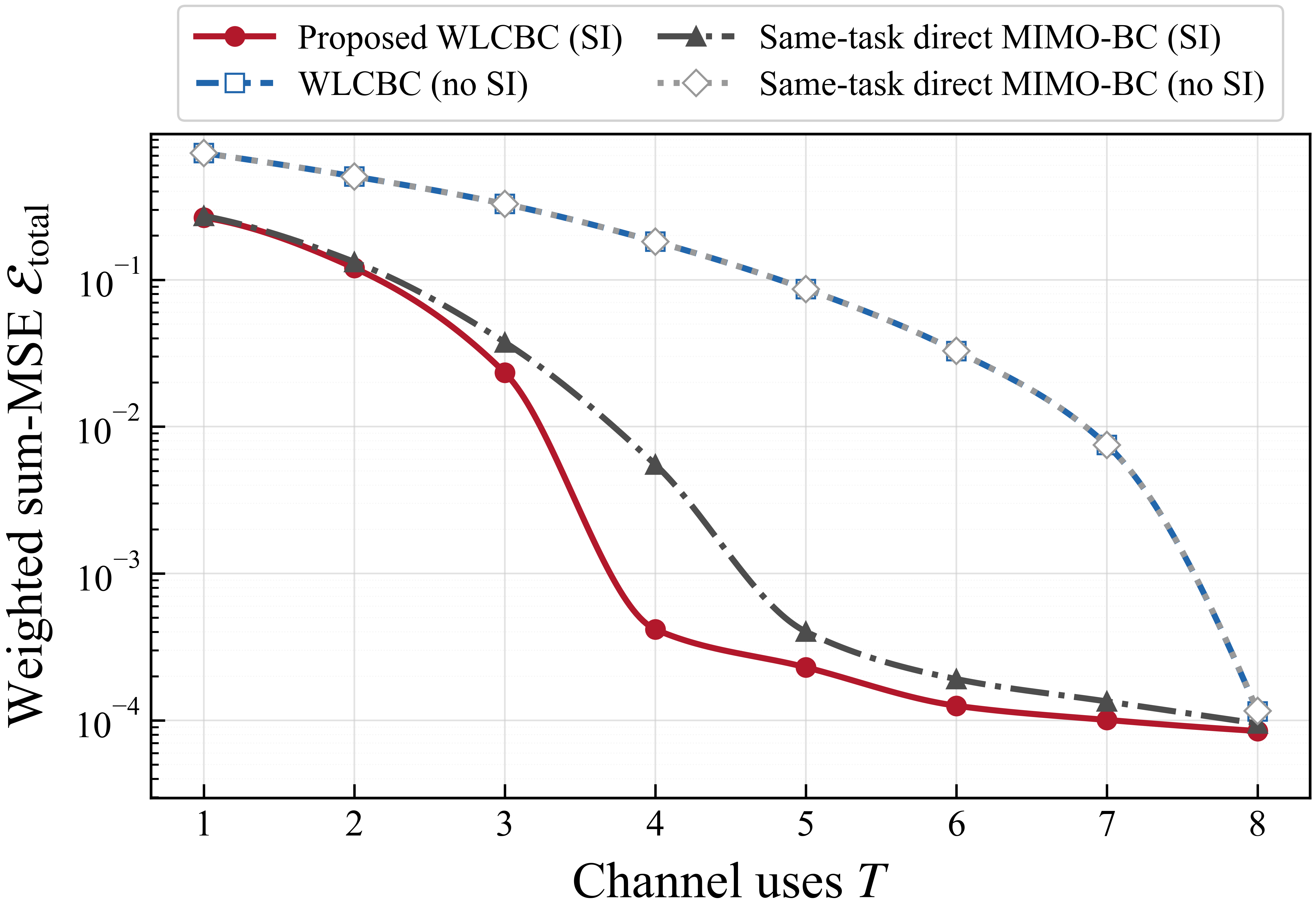}
  \caption{\added{Total MSE versus the number of channel uses
  $T$ in Experiment~3, with $K=4$, $n=24$, $(d_i,d_i')=(4,12)$,
  $M=2$, $N_i=1$, $\gamma_c=40$ dB, and $\gamma_s=40$ dB for the
  SI-enabled schemes. The matched same-task direct-output MIMO-BC
  baseline restricts $\operatorname{col}(\mathbf E)$ to
  $\Span(\mathbf V_{\mathrm{tot}})$, whereas WLCBC optimizes
  $\mathbf E$ over the full source space $\mathbb R^n$.
  ``SI'' denotes receiver side information.}}
  \label{fig:wlcbc_same_task_mimo}
\end{figure}
As shown in Fig.~\ref{fig:wlcbc_same_task_mimo}, without side information, WLCBC and the direct-output baseline perform nearly identically. In this case, any precoding directions outside $\Span(\mathbf V_{\mathrm{tot}})$ are uncorrelated with the requested outputs and merely waste transmit power. Thus, the optimal WLCBC precoder naturally lies within the aggregate demand span.

The presence of side information changes the geometry of useful transmitted
directions. For user $i$, the relevant conditional cross-covariance is
$
\operatorname{Cov}
\left(
\mathbf w_i,\mathbf s
\mid
\mathbf w_i'
\right)
=
\mathbf V_i^\top\bm\Pi_i\mathbf E,
\label{eq:conditional_cross_cov_exp3}
$
where $\bm\Pi_i=\operatorname{Cov}(\mathbf x\mid\mathbf w_i')$ is the
posterior residual covariance. Since $\bm\Pi_i$ does not necessarily preserve
$\Span(\mathbf V_{\mathrm{tot}})$, a direction
$\mathbf e\in\Span(\mathbf V_{\mathrm{tot}})^\perp$ may satisfy
$\mathbf V_i^\top\mathbf e=0$ but
$\mathbf V_i^\top\bm\Pi_i\mathbf e\neq0$. Such a direction is irrelevant to
the demanded outputs in the marginal source geometry, but can become
informative after conditioning on user $i$'s noisy side information. It
therefore acts as a side-information-complementary innovation direction.
Although the direct-output baseline exploits side information through its
LMMSE receivers and the associated encoder optimization, its transmitted
directions remain restricted to $\Span(\mathbf V_{\mathrm{tot}})$. In
contrast, WLCBC can additionally allocate transmit power to such conditional
innovation directions within the full source space. 

At $T=4$, incorporating side information reduces the MSE of the direct-output
baseline from approximately $1.82\times10^{-1}$ to $5.51\times10^{-3}$,
confirming that the baseline itself benefits substantially from this additional
information. However, WLCBC achieves a further order-of-magnitude improvement,
dropping the MSE to $4.17\times10^{-4}$---corresponding to an impressive
$92.4\%$ reduction over the side-information-aided baseline. Conversely, at $T=8$,
the available transmit dimension fully accommodates the aggregate demand rank
($r=q=16$). With the baseline's dimensional bottleneck largely eliminated,
the performance gap between the schemes naturally narrows.
\end{addedtext}
\section{Conclusion and Future Work}\label{sec:conclusion}

This paper studied WLCBC over real Gaussian MIMO broadcast channels with noisy receiver-side information under a centralized linear transceiver design formulation.
We formulated a WSMSE optimization problem under a unified stacked-channel power constraint and developed an efficient alternating-optimization framework.
\begin{addedtext}
The resulting AO procedure yields a monotonically non-increasing WSMSE objective sequence, and every accumulation point satisfies the KKT conditions of the joint design problem.
\end{addedtext}
For a fixed precoder, we established that the optimal LMMSE decoders admit an efficient implementation via a thin-SVD side-information reduction and a Schur-complement evaluation.
For fixed decoders, the precoder update is cast as a convex QCQP, for which we derived a semi-closed-form solution parameterized by a single optimal multiplier.
Numerical experiments demonstrated substantial robustness gains over classical two-user LCBC baselines in noisy regimes and revealed a clear channel-use-versus-distortion tradeoff across various antenna configurations.
\begin{addedtext}
The results further illustrated the effects of heterogeneous per-user SNRs and the convergence behavior of AO.
The matched-baseline comparison also showed that, in the presence of side information, WLCBC can outperform a direct-output MIMO-BC design by exploiting side-information-complementary directions in the full source space.
In particular, the observed high-SNR distortion waterfall provides an empirical indication of a spatio-temporal transmission-dimension bottleneck in the considered finite-dimensional linear WLCBC design.
\end{addedtext}

While the proposed framework is developed for real-valued Gaussian channels under perfect CSI, its underlying principles open several compelling avenues for future research.
\begin{addedtext}
The real-valued formulation admits a direct extension to proper complex Gaussian baseband models by replacing transposes with Hermitian transposes and using the corresponding complex covariance matrices.
Robust transceiver designs under imperfect or statistical CSI remain an important direction for future work.
\end{addedtext}
Furthermore, extending the current linear transceiver framework to accommodate nonlinear computational demands represents a promising frontier, particularly for functional delivery in distributed AI and machine learning applications.
Finally, investigating hybrid digital--analog schemes could yield additional robustness and hardware efficiency.

\section*{Acknowledgment}
The authors acknowledge the use of ChatGPT (OpenAI) to improve the readability of this manuscript, to assist in developing 
simulation codes, and to support the formatting of mathematical 
expressions. The authors have thoroughly reviewed and edited the text, independently verified all analytical derivations 
and simulation results, and assume full responsibility for the 
overall content and integrity of the final publication.
\bibliographystyle{IEEEtran}  
\bibliography{reference}
\begin{IEEEbiographynophoto}{Shuo Tan}
(Graduate Student Member, IEEE) received his B.E. degree in Computer Science (Excellence Plan) from Chongqing University, Chongqing, China, in 2025. He is currently pursuing the M.S./Ph.D. degrees with the Department of Electrical Engineering and Computer Science (EECS) at the University of California, Irvine (UCI), Irvine, CA, USA. He is a recipient of the UCI EECS Department Fellowship. His research interests include information theory and its applications.
\end{IEEEbiographynophoto}

\begin{IEEEbiographynophoto}{Syed A. Jafar}
(Fellow, IEEE) received his B. Tech. from IIT Delhi, India, in 1997, M.S. from Caltech, USA, in 1999, and Ph.D. from Stanford, USA, in 2003, all in Electrical Engineering. His industry experience includes positions at Lucent Bell Labs and Qualcomm. He is a Chancellor’s Professor in the Department of Electrical Engineering and Computer Science and the Henry Samueli Endowed Chair in Engineering at the University of California, Irvine (UCI), Irvine, CA, USA. His research interests include multiuser information theory, wireless communications and network coding.

Dr. Jafar is a recipient of the New York Academy of Sciences Blavatnik National Laureate in Physical Sciences and Engineering, the NSF CAREER Award, the ONR Young Investigator Award, the UCI Academic Senate Distinguished Mid-Career Faculty Award for Research, the School of Engineering Mid-Career Excellence in Research Award and the School of Engineering Maseeh Outstanding Research Award. His co-authored papers have received the IEEE Information Theory Society Paper Award, IEEE Communication Society and Information Theory Society Joint Paper Award, IEEE Communications Society Best Tutorial Paper Award, IEEE Communications Society Heinrich Hertz Award, IEEE Signal Processing Society Young Author Best Paper Award, IEEE Information Theory Society Jack Wolf ISIT Best Student Paper Award, and three IEEE GLOBECOM Best Paper Awards. Dr. Jafar received the UC Irvine EECS Professor of the Year award six times from the Engineering Students Council, a School of Engineering Teaching Excellence Award in 2012, and a Senior Career Innovation in Teaching Award in 2018. He was a University of Canterbury Erskine Fellow, an IEEE Communications Society Distinguished Lecturer, an IEEE Information Theory Society Distinguished Lecturer and a Thomson Reuters/Clarivate Analytics Highly Cited Researcher. He served as Associate Editor for IEEE TRANSACTIONS ON COMMUNICATIONS 2004-2009, for IEEE COMMUNICATIONS LETTERS 2008-2009 and for IEEE TRANSACTIONS ON INFORMATION THEORY 2009-2012.
\end{IEEEbiographynophoto}
\vfill
\end{document}